\tikzset{join/.code=\tikzset{after node path={%
\ifx\tikzchainprevious\pgfutil@empty\else(\tikzchainprevious)%
edge[every join]#1(\tikzchaincurrent)\fi}}}
\tikzset{>=stealth',every on chain/.append style={join},
         every join/.style={->}}
\newcommand{\bcat}{$\beta$-catenin }
\newtheorem{thm}{Theorem}[section]   
\newtheorem{cor}[thm]{Corollary}     
\newtheorem{lemma}[thm]{Lemma}         
\newtheorem{prop}[thm]{Proposition}  
\theoremstyle{definition} 
\newtheorem{defn}[thm]{Definition}   
\newtheorem{remark}[thm]{Remark}   
\newtheorem{ex}[thm]{Example}        
\newtheorem{rmk}[thm]{Remark}
\newcommand{\RR}{\ensuremath{\mathbb{R}}}
\newcommand{\CC}{\ensuremath{\mathbb{C}}}
\title{Algebraic systems biology: \\
a case study for the Wnt pathway}
\author{Elizabeth Gross, Heather A. Harrington, Zvi Rosen, and Bernd Sturmfels}
\address{Elizabeth Gross: San Jos\'e State University,
{\tt elizabeth.gross@sjsu.edu} \hfill \break
Heather A.~Harrington: University of Oxford, 
{\tt harrington@maths.ox.ac.uk} \hfill \break
Zvi Rosen: University of California at Berkeley,
{\tt zhrosen@berkeley.edu} \hfill \break
Bernd Sturmfels: University of California at Berkeley,
{\tt bernd@berkeley.edu}
}
\begin{document}
\setlength{\parskip}{5pt}

\begin{abstract} 
Steady state analysis of dynamical systems for biological networks
give rise to algebraic varieties in high-dimensional spaces whose study 
is of interest in their own right. We demonstrate this for the shuttle model of the 
Wnt signaling pathway. Here the variety is described by a polynomial system in  
$19$ unknowns and $36$ parameters. Current methods from computational 
algebraic geometry and combinatorics are applied to analyze this model.
\end{abstract}

 \maketitle

\section{Introduction}

The theory of biochemical reaction networks is fundamental for
 systems biology \cites{Klipp, Voit}.  It is based on a wide range of 
mathematical fields, including dynamical systems,  numerical analysis,
optimization, combinatorics,  probability, and,
last but not least, algebraic geometry. 
There are numerous articles that use algebraic geometry in the study of
biochemical reaction networks, especially those arising from mass action kinetics.
A tiny selection is \protect{\cite{CF,FW,KPDDG,PDSC,ShSt}}.

We here perform a detailed analysis of one specific system,
namely the shuttle model for the Wnt signaling pathway, 
introduced recently by MacLean, Rosen, Byrne, and Harrington~\cite{MRBH}.
Our aim is twofold: to demonstrate how biology can lead to interesting
questions in algebraic geometry and to apply state-of-the-art techniques from computational algebra
to biology.

The dynamical system we study consists of the following $19$ ordinary differential equations.
Their derivation and  the relevant background from biology will be presented in Section~\ref{sec:bio}.
\begin{equation}
 \label{eq:diffeqn}
\begin{matrix}
	  {\dot{x}_{1}}		&=& -k_{1} x_{1} + k_{2} x_{2}  \\
	 {\dot{x}_{2}} 	    	&=&  k_{1} x_{1} - (k_{2}+ k_{26}) x_{2} + k_{27} x_{3} - k_{3} x_{2} x_{4} +  (k_{4}+k_{5}) x_{14}  \\
	 {\dot{x}_{3}} 		&= &  k_{26} x_{2} - k_{27} x_{3} - k_{14} x_{3} x_{6} + (k_{15}+   k_{16}) x_{15} \\
	 {\dot{x}_{4}} 		&=& -k_{3} x_{2} x_{4} - k_{9} x_{4} x_{10} + k_{4} x_{14} + k_{8} x_{16} +  ( k_{10} + k_{11}) x_{18} \\
	 {\dot{x}_{5}} 		&= & -k_{28} x_{5} + k_{29} x_{7} - k_{6} x_{5} x_{8} + k_{5} x_{14} + k_{7} x_{16} \\
	 {\dot{x}_{6}} 		&= & -k_{14} x_{3} x_{6} - k_{20} x_{6} x_{11} + k_{15} x_{15} +  k_{19} x_{17} + (k_{21} + k_{22}) x_{19} \\
	 {\dot{x}_{7}}		&=& k_{28} x_{5} - k_{29} x_{7} - k_{17} x_{7} x_{9} + k_{16} x_{15} +  k_{18} x_{17} \\
	 {\dot{x}_{8}} 		\,\,=\,  - {\dot{x}_{16}} & = &
	 -k_{6} x_{5} x_{8} + (k_{7} + k_{8}) x_{16}  \\
	 {\dot{x}_{9}} \,\, = \, 	- {\dot{x}_{17}}	&=&  -k_{17} x_{7} x_{9} + (k_{18} + k_{19}) x_{17} \\
	 {\dot{x}_{10}}	&=& k_{12} - (k_{13}+k_{30}) x_{10} - k_{9} x_{4} x_{10} + k_{31} x_{11} +  k_{10} x_{18} \\	
	 {\dot{x}_{11}}	&=&  -k_{23}x_{11} + k_{30} x_{10} - k_{31} x_{11} - k_{20} x_{6} x_{11} - k_{24} x_{11} x_{12} +   k_{25} x_{13} + k_{21} x_{19} \\
	 {\dot{x}_{12}}	\,\, = \, -{\dot{x}_{13}} &=& -k_{24} x_{11} x_{12} + k_{25} x_{13}  \\
	 {\dot{x}_{14}} 	&=& k_{3} x_{2} x_{4} - (k_{4} + k_{5}) x_{14}  \\
		 {\dot{x}_{15}}	&=& k_{14} x_{3} x_{6} - (k_{15} + k_{16}) x_{15} \\
	 {\dot{x}_{18}}	&=& k_{9} x_{4} x_{10} - (k_{10} + k_{11}) x_{18} \\
	 {\dot{x}_{19}}	&=& k_{20} x_{6} x_{11} - (k_{21} + k_{22}) x_{19}
	 \end{matrix}
\end{equation}

The quantity $x_i$ is a differentiable function of an unknown $t$, representing time, and ${\dot{x}_i}(t)$ is
the derivative of that function. 
This dynamical system has five linear conservation laws:
\begin{equation}
 \label{eq:conservation}
\begin{matrix}
	 0	 &=& (x_1+x_2+x_3+x_{14}+x_{15})-c_1 \\
	 0 &= & (x_4+x_5+x_6+x_7+x_{14}+x_{15}+x_{16}+x_{17}+x_{18}+x_{19}) -c_2 \\
	 0	&=  & (x_8+x_{16})-c_3 \\
	 0	&= & (x_9+x_{17})-c_4 \\
	 0	  &= & (x_{12}+x_{13}) -c_5
\end{matrix}
\end{equation}
The $31$ quantities $k_i$ are the rate constants of the chemical reactions, and the five $c_i$ 
are the conserved quantities. Both of these are regarded as parameters, so we
have $36$ parameters in total.  Our object of interest is the {\em steady state variety},
which is the common zero set of the right hand sides  of (\ref{eq:diffeqn}) and (\ref{eq:conservation}).
This variety lives in $K^{19}$, where  $K$ is an algebraically closed field 
that contains the rational numbers $\mathbb{Q}$ as well as the $36$ parameters $k_i$ and $c_i$.
If these parameters are fixed to be particular real numbers then we can take
$K = \mathbb{C}$, the field of complex numbers. If it is preferable to regard 
${\bf k} = (k_1,\ldots,k_{31})$ and ${\bf c} = (c_1,\ldots,c_5)$ as vectors of unknowns,
then $K = \overline{ \mathbb{Q}({\bf k},{\bf c})}$ is the algebraic closure
of the rational function field. 
In this latter setting, when all parameters are generic, we shall derive the following result:

\begin{thm}
\label{thm:nine}
The polynomials in 
(\ref{eq:diffeqn})--(\ref{eq:conservation}) have
$9$ distinct zeros in $K^{19}$ when $K = \overline{ \mathbb{Q}({\bf k},{\bf c})}$.
\end{thm}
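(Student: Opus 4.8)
The plan is to count the zeros by first eliminating the seven intermediate-complex species algebraically, then reducing the surviving equations to a single univariate polynomial whose degree can be read off. The count of $9$ should emerge as that degree, and genericity should give distinctness.

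First I would exploit the seven steady-state equations that are binomial in the complexes. Setting $\dot{x}_{14}=\dot{x}_{15}=\dot{x}_{18}=\dot{x}_{19}=0$ and $\dot{x}_8=\dot{x}_9=\dot{x}_{12}=0$ lets me solve, for example,
\[
x_{14} = \frac{k_3 x_2 x_4}{k_4+k_5}, \qquad x_{18} = \frac{k_9 x_4 x_{10}}{k_{10}+k_{11}},
\]
and analogously $x_{15},x_{19},x_{16},x_{17},x_{13}$ as monomials in $x_1,\dots,x_{12}$ divided by rate constants. These seven expressions are rational functions (indeed monomials up to constants), so substituting them removes seven unknowns without introducing new branches, provided the relevant parameter combinations are nonzero, which holds generically.

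Second, I would substitute these into the nine remaining right-hand sides $\dot{x}_1,\dots,\dot{x}_7,\dot{x}_{10},\dot{x}_{11}$. Two of these become redundant because of the two nontrivial linear syzygies among the derivatives recorded by the conservation laws $c_1$ and $c_2$, leaving seven independent steady-state equations in $x_1,\dots,x_{12}$. Several of these remain linear in individual species; for instance $\dot{x}_1=0$ gives $x_1=(k_2/k_1)x_2$. I would therefore solve the linear part to obtain a rational parametrization of the steady-state variety by a small set of free coordinates, and then adjoin the five conservation laws, producing a square polynomial system in those free coordinates.

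Third, I would eliminate down to a single univariate polynomial $p$ in one remaining species and argue that its degree is $9$; equivalently, compute a Gr\"obner basis of the reduced ideal over $\overline{\mathbb{Q}(\mathbf{k},\mathbf{c})}$ and count standard monomials. Because a symbolic Gr\"obner basis over a $36$-parameter function field is prohibitively expensive, I expect the \emph{main obstacle} to be making this count rigorous rather than merely heuristic. The practical route is to specialize $\mathbf{k},\mathbf{c}$ to random values over a finite field $\mathbb{F}_p$, where the number of standard monomials of the resulting zero-dimensional ideal equals the generic function-field count by semicontinuity of Gr\"obner bases under generic specialization, and to cross-check by numerical homotopy continuation. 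Finally, to upgrade ``nine solutions counted with multiplicity'' to ``nine distinct zeros,'' I would verify that the computed solutions have pairwise distinct coordinates, equivalently that the discriminant of $p$ is not the zero polynomial; this forces the ideal to be radical on a dense open subset of parameter space, which is exactly the generic hypothesis $K=\overline{\mathbb{Q}(\mathbf{k},\mathbf{c})}$.
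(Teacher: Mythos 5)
Your reduction steps are sound: eliminating the seven intermediate complexes via their binomial steady-state equations, discarding the two syzygies coming from $c_1$ and $c_2$, and solving the linear relations does produce a square system in the remaining coordinates, and this mirrors the parametrizations the paper builds in Section~\ref{sec5}. The genuine gap is in your third step, the one you yourself identify as the crux. Specializing the parameters at a random point (over $\mathbb{F}_p$ or over $\mathbb{Q}$) does not ``equal the generic function-field count by semicontinuity.'' Under specialization the colength of a zero-dimensional ideal can jump in either direction: $\langle kx,\,x^2\rangle$ has generic length $1$ but length $2$ at $k=0$, while $\langle kx-1\rangle$ has generic length $1$ but length $0$ at $k=0$. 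Equality with the generic count holds only for specializations outside a proper Zariski-closed bad locus, and you cannot certify that your random point avoids that locus without already knowing the generic Gr\"obner basis --- precisely the computation you are trying to avoid. Over a finite field the situation is worse: the bad locus is hit with positive probability (not probability zero), and passing to characteristic $p$ adds bad primes as a further uncontrolled degeneration. What a certified computation at one rational parameter point does give, by the local-degree (coefficient-parameter continuation) argument for square systems, is only the lower bound: $9$ distinct nonsingular solutions at one point force at least $9$ generically. The missing upper bound $\leq 9$ must come from elsewhere --- either from the symbolic Gr\"obner basis over $\mathbb{Q}(\mathbf{k},\mathbf{c})$, or from a Bernstein-type bound (e.g.\ the mixed volume $9$ attained by the reduced systems $J_Y$ in Table~\ref{tab-bases}, together with an argument excluding solutions outside the torus). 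Your final discriminant check, used to pass from multiplicity count to distinctness, inherits exactly the same certification gap, since it too is evaluated only at the random specialization.

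This is also where your proposal parts ways with the paper. The paper's proof is exactly the computation you dismiss as prohibitively expensive: a Gr\"obner basis computation in $\mathbb{Q}(\mathbf{k},\mathbf{c})[\mathbf{x}]$, with all $36$ parameters kept as indeterminates in the coefficient field, which certifies that the ideal of \eqref{eq:chichi} is zero-dimensional, of length $9$, and reduced. The random-rational specializations in {\tt Macaulay2} and the {\tt Bertini}/{\tt PHCpack} runs appear in the paper only as cross-checks, explicitly flagged as holding ``with probability~$1$'' --- that is, as the heuristic, not the proof. So either carry out the function-field computation (possibly after your elimination steps, which genuinely shrink it), or replace your specialization step by the lower-bound-plus-mixed-volume argument sketched above; as written, your probability-one step cannot be upgraded to a proof.
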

By analyzing the steady state variety, we can better understand the model, which is nonlinear, 
and thus the biological system. The aim is to predict the system's behavior, offer biological insight, 
and determine what data are required to verify or reject the model. 
Here is a list of questions one might ask about our model from
the perspective of systems biology.

\subsection*{Biological Problems}
These are labeled according to the section that will address them.
\begin{enumerate}[1.]

\item[4.] {\em For what real positive rate parameters and conserved quantities does the system exhibit multistationarity?}  This question is commonly asked when using a dynamical system for modeling a real-world phenomenon. 
When modeling a process that experimentally appears to have more than one stable equilibrium, multistationary models are preferred.

\item[5.] {\em Suppose we can measure only a subset of the species concentrations. Which subsets can lead to model rejection?}
If all species are measurable at steady state, then we can substitute data into the system (\ref{eq:diffeqn}), and check that all expressions ${\dot{x}_i}$
are close to zero.
 If only some $x_i$ are known,  we still want to be able to evaluate models with the available data. 

\item[6.] {\em Give a complete description of the stoichiometric compatibility classes for the chemical reaction network.} A  stoichiometric compatibility class is the set of all points accessible from a given state via 
the reactions in the system. 
This question relates more closely to the dynamics of the system, but also has ramifications for the
set of all steady states.

\item[7.] {\em What information does species concentration data give us for parameter estimation?} \linebreak In particular, are the parameters identifiable? Identifiability  means that 
having many measurements of the concentrations ${\bf x}$ can determine the
reaction rate constants ${\bf k}$. If not identifiable, we will explore algebraic constraints imposed by the species concentration data. This question is relevant for complete and partial steady-state  data (usually noisy).
\end{enumerate}

\smallskip
These questions are open challenges for medium to large models in
systems biology and medicine \cite{Klipp, Voit}.
The book chapter \cite{MHSB} illustrates
standard mathematical and statistical methods for addressing these questions, with Wnt signaling as a
case study. Here, we examine
these questions from the perspective of algebraic geometry.
The aim is to provide insight into global behavior by applying tools from
nonlinear algebra to synthetic and systems
biology. Below are the algebraic problems underlying the four biological problems listed above. 
 
\subsection*{Algebraic Problems}

\begin{enumerate}[1.]

\item[4.]  Describe the set of points $({\bf k},{\bf c}) \in \RR_{>0}^{31} \times \RR_{>0}^5$ such that
the polynomials (\ref{eq:diffeqn})-(\ref{eq:conservation}) have two or more
positive zeros ${\bf x} \in \RR_{>0}^{19}$.
When is there only one? Identify the discriminant.

\item[5.]  Which projections of the variety defined by (\ref{eq:diffeqn}) into coordinate subspaces of $K^{19}$ 
are surjective? Equivalently, describe the algebraic matroid on the ground set  $\{x_1,\ldots,x_{19}\}$.

\item[6.] The conservation relations (\ref{eq:conservation}) specify a linear map $\chi : \RR^{19} \to \RR^5,\,
{\bf x} \mapsto {\bf c}$. Describe all the convex polyhedra
$\,\chi^{-1}({\bf c}) \cap \RR^{19}_{\geq 0}\,$ where ${\bf c}$ runs over the
points in the open orthant $ \RR_{>0}^5$.

\item[7.] \begin{enumerate}[a.] \item {\em Complete data:} Describe the matroid on the ground set 
$\{k_1,k_2,\ldots,k_{31}\}$ that is defined by the linear forms
on the right hand sides of (\ref{eq:diffeqn}), for
fixed steady-state concentrations.
\item {\em Partial steady-state data without noise:} 
Repeat the analysis after eliminating some of the ${\bf x}$-coordinates.
\item {\em Partial steady-state data with noise:} For the remaining ${\bf x}$-coordinates,
suppose that we have data
which are {\em approximately} on the 
projected steady state variety.
Determine a parameter vector $({\bf k},{\bf c})$ that best fits the data. 
\end{enumerate}
\end{enumerate}

In this paper we shall address these questions, and several related ones, after
explaining the various ingredients.
A particular focus is the exchange between the algebraic formulation and its 
biological counterpart.
Our presentation is organized as follows.

In  Section \ref{sec:bio} we review the basics on the Wnt signaling pathway,
we recall the shuttle model of  MacLean {\it et al.}~\cite{MRBH}, and we derive
the dynamical system (\ref{eq:diffeqn})--(\ref{eq:conservation}).
In Section \ref{sec3} we establish Theorem \ref{thm:nine},
and we examine the set of all steady states. This is here regarded
as a complex algebraic variety in an affine space of dimension
$55 = 19+31+5$ with coordinates $({\bf x},{\bf k},{\bf c})$.

In Sections \ref{sec4}, \ref{sec5}, \ref{sec6}, and \ref{sec7} 
we address the four problems stated above. The numbers of the problems
refer to the respective sections. Each section starts out with an explanation
of how the biological problem and the algebraic problem are related.
The rationale behind Section \ref{sec4}  is likely to  be familiar to most of our
readers, given that multistationarity has been discussed widely in the 
literature; see e.g.~\cite{CF,PDSC}.
On the other hand,
in Section~\ref{sec5} we employ the language of matroid
theory. This may be unfamiliar to many readers, especially
 when it comes to the algebraic matroid associated with
an irreducible algebraic variety.  Section~\ref{sec6} characterizes 
the polyhedral geometry encoded in the
conservation relations (\ref{eq:conservation}).
This is a case study
in the spirit of \cite[Figure 1]{ShSt}.
Section~\ref{sec7} addresses
the problems of parameter identifiability and parameter estimation.
Finally, in Section~\ref{sec8} we return to the biology,
and we discuss what our findings might imply for the study of
Wnt signaling and other systems.

\bigskip \bigskip \bigskip

\section{From Biology to Algebra}
\label{sec:bio}

Cellular decisions such as cell division, specialization and cell death are governed by a rich repertoire of complex signals that are produced by other cells and/or stimuli. In order for a cell to come to an appropriate decision, it must \emph{sense} its external environment, communicate this information to the nucleus, and respond by regulating genes and producing relevant proteins. Signaling molecules called ligands, external to the cell, can bind to proteins called receptors, initializing the propagation of information within the cell by molecular interactions and modifications (e.g. phosphorylation). This signal may be relayed from the cytoplasm into the nucleus via molecules and the cell responds by activation or deactivation of gene(s) that control, for example, cell fate. The complex interplay of molecules involved in this information transmission is called a signaling transduction pathway.  Although many signaling pathways have been defined biochemically, much is still not understood about them or how a signal results in a particular cellular response. Mathematical models constructed at different scales of molecular complexity may 
help unravel the central mechanisms that govern cellular decisions, and their analysis may 
inform and guide testable hypotheses and therapies.

In this paper, we focus on the canonical Wnt signaling pathway, which is involved in cellular processes, 
both during development and in adult tissues. This includes stem cells.
Dysfunction of this pathway has been linked to neurodegenerative diseases and cancer.
Consequently, Wnt signaling 
has been widely studied in various organisms, including amphibians and mammals. Researchers are interested in how the extracellular ligand Wnt affects the protein $\beta$-catenin, 
which plays a pivotal role in turning genes on and off in the nucleus.

The molecular interactions within the Wnt signaling pathway are not yet fully understood.
This  has led to the development and analysis of many mathematical models. The Wnt shuttle model \cite{MRBH} includes an abstraction of the signal transduction pathway (via activation/inactivation of molecules) described above. 
The model also takes into account molecules that
exist, interact and move between different compartments  in the cell
 (e.g., cytoplasm and nucleus). Biologists understand the Wnt system as either {\em Wnt off} or {\em Wnt on}.
 However, such a scenario is rarely binary (i.e., different concentration levels of Wnt may exist) and inherently depends on spatial movement of molecules. The Wnt shuttle model 
 includes complex interactions with nonlinearities arising in the equations.
 In particular, it  includes both the Wnt off and Wnt on scenarios,  by adjusting initial conditions or parameter values.  The biology needed to understand the model can be described as follows. See also Table~\ref{tab-notation}.
\par

\textit{Wnt off:} When cells do not sense the extracellular ligand Wnt, \bcat~is degraded (broken down). The degradation of \bcat~is partially dependent on a group of molecules (Axin, APC and GSK-3) that form the {\em destruction complex}.
Crucially, the break down of \bcat~occurs when the destruction complex is in an active state; modification to the destruction complex by proteins, called phosphatases, changes it from inactive to active. 
Additionally, $\beta$-catenin can degrade independent of the destruction complex. Synthesis of $\beta$-catenin occurs at a constant rate.
\par
\textit{Wnt on:} When receptors on the surface of a cell bind to Wnt, the Wnt signaling transduction pathway is initiated.
This enables \bcat~to move into the nucleus where it binds with transcription factors that regulate genes. 
This signal propagation is mediated by the following molecular interactions. After Wnt stimulus,
 the protein Dishevelled is activated near the membrane. This in turn inactivates the destruction 
 complex, thereby preventing the destruction of $\beta$-catenin, allowing it to accumulate in the cytoplasm
  through natural synthesis. Throughout the molecular interactions in the signaling pathway, 
  intermediate complexes can form (e.g., \bcat~bound with Dishevelled). 
\par
\textit{Space:} The location of molecules plays a pivotal role: \bcat~moves between the 
cytoplasm and the nucleus (to reach target genes and regulate them). Dishevelled and molecules that form the destruction complex
shuttle between the nucleus and the cytoplasm. However, it is assumed that only the inactive destruction complex can shuttle (since in the cytoplasm it would be bound to $\beta$-catenin). 
Phosphatases exist in both the nucleus and the cytoplasm but the movement across compartments is not included in the model.  Symmetry of reactions is assumed if the species exist in both compartments.
Intermediate complexes are assumed to be short-lived, or not large enough for movement across compartments.

\par
The Wnt shuttle model of \cite{MRBH} has
$19$ species whose interactions can be framed as biochemical reactions.
 These species correspond to
variables $x_1, \ldots, x_{19}$ in our dynamical system
(\ref{eq:diffeqn}). Namely, $x_i$ represents the concentration of the species that is
listed in the $i$th row in Table~\ref{tab-notation}.

\begin{longtable}{|c|c|l|}
	\hline  {\bf Variable}  & {\bf Species} & {\bf Symbol} \\ \hline
		& {\bf Dishevelled} 		& $\boldsymbol{D}$  \\ 
	$x_1$ & Dishevelled in cytoplasm (inactive) 		& $D_i$  \\ 
	$x_2$ &  Dishevelled in cytoplasm (active) 		& $D_a$  \\ 
	$x_3$ &  Dishevelled in nucleus (active) 		& $D_{an}$   \\	\hline
	 & {\bf Destruction complex  (APC/Axin/GSK3$\beta$)}	& $\boldsymbol{Y}$	\\
	$x_4$& Destruction complex in cytoplasm (active)	&   $Y_a$	\\
	$x_5$  & Destruction complex in cytoplasm (inactive)	&   $Y_i$	\\
	$x_6$  & Destruction complex in nucleus (active)	&   $Y_{an}$	\\
	$x_7$  & Destruction complex in nucleus (inactive)	&   $Y_{in}$	\\ \hline
	  & {\bf Phosphatase}		& 	$\boldsymbol{P}$ \\ 
	 $x_8$ & Phosphatase in cytoplasm		& 	 $P$ \\ 
	 $x_9$ & Phosphatase in nucleus		&   $P_n$ 	\\ \hline
	&	 $\boldsymbol{\beta-}${\bf catenin} 			&   $\boldsymbol{x}$ 		\\
	$x_{10}$ & \bcat in cytoplasm			& 	$x$  	\\
	$x_{11}$ 	 & \bcat in nucleus			& 	  $x_n$ \\ \hline
	&		 { \bf Transcription Factor} 			&  $\boldsymbol{T}$	\\ 
	 $x_{12}$ & TCF (gene transcription in nucleus)			& $T$ 	\\ \hline
	 	& {\bf Intermediate complex}		& $\boldsymbol{C}$ \\ 
	$x_{13}$ 	& Transcription complex, $\beta$-catenin: TCF in nucleus		& 	$C_{xT}$ \\ 
	 $x_{14}$ & Intermediate complex, $\beta$-catenin: dishevelled in cytoplasm		&  $C_{YD}$ 	\\ 
	$x_{15}$ 	& Intermediate complex, destruction complex: dishevelled in nucleus		& $C_{YDn}$ 	\\ 
	 $x_{16}$ & Intermediate complex, destruction complex: phosphatase in cytoplasm		& $C_{YP}$ 	\\
	 $x_{17}$ & Intermediate complex, destruction complex: phosphatase in nucleus		& $C_{YPn}$  	\\
	 $x_{18}$ & Intermediate complex, $\beta$-catenin: destruction complex in cytoplasm	&$C_{xY}$  	\\ 
	 $x_{19}$ 	& Intermediate complex, $\beta$-catenin: destruction complex in nucleus		&  $C_{xYn}$  	\\ 

	\hline
\caption{The $19$ species in the Wnt shuttle model.}
\label{tab-notation}
\end{longtable}

The second column in Table~\ref{tab-notation}
indicates the biological meaning of the $19$ species.
The symbols in the last column are those used in
the presentation of the Wnt shuttle model in~\cite{MRBH}.

The $19$ species in the model interact according to the $31$ reactions given in Table~\ref{tab-reactions}.
Each reaction comes with a rate constant $k_i$. These are the 
coordinates of our parameter vector~${\bf k} $.

\medskip

\begin{longtable}{|c|l|}
	\hline {\bf Reaction} & {\bf Explanation}  \\ \hline
	 ${\xymatrix {x_{1}\ar@<0.3ex>[r]^(.5){k_{1}} & x_{2} \ar@<0.3ex>[l]^(.5){k_{2}}}}$	& (In)activation of dishevelled, depends on Wnt		 \\ 
	 ${\xymatrix{x_{2} + x_{4}  \ar@<0.3ex>[r]^(.6){k_3}  & x_{14} \ar@<0.3ex>[l]^(.4){k_4}  \ar[r]^(.4){k_5} & x_{2} + x_{5} }}$
							& Destruction complex active $\rightarrow$ inactive  \\
	${\xymatrix {x_{5} + x_{8} \ar@<0.3ex>[r]^(.6){k_6}  & x_{16} 	\ar@<0.3ex>[l]^(.4){k_{7}} \ar[r]^(.4){k_{8}} & x_{4} + x_{8}}}$
							& Destruction complex inactive $\rightarrow$ active  \\ 
		${\xymatrix {x_{4} + x_{10} 	\ar@<0.3ex>[r]^(.6){k_9}  & 	x_{18} 	\ar@<0.3ex>[l]^(.4){k_{10}} \ar[r]^(.4){k_{11}} &x_{4} + \emptyset}}$
	 						& Destruction complex-dependent \bcat degradation  \\
							${\xymatrix {\emptyset \ar[r]^(.5){k_{12}}& x_{10} }}$  	& \bcat production			 \\
 	${\xymatrix {x_{10}  \ar[r]^(.5){k_{13}} & \emptyset}}$	& Destruction complex-independent \bcat degradation	 \\ \hline	 
	   
	${\xymatrix {x_{3} + x_{6} \ar@<0.3ex>[r]^(.6){k_{14}}  & x_{15} 	\ar@<0.3ex>[l]^(.4){k_{15}} \ar[r]^(.4){k_{16}} & x_{3} + x_{7} }}$
	    						& Destruction complex active $\rightarrow$ inactive (nucleus) \\
	${\xymatrix {x_{7} + x_{9} \ar@<0.3ex>[r]^(.6){k_{17}}  & x_{17} \ar@<0.3ex>[l]^(.4){k_{18}} \ar[r]^(.4){k_{19}} & x_{6} + x_{9} }}$
							& Destruction complex inactive $\rightarrow$ active (nucleus) \\
		${\xymatrix {x_{6} + x_{11} \ar@<0.3ex>[r]^(.6){k_{20}}  &	x_{19} \ar@<0.3ex>[l]^(.4){k_{21}} \ar[r]^(.4){k_{22}} & x_{6} + \emptyset}}$												& Destruction complex-dependent \bcat degradation (nucleus)  \\
							${\xymatrix {x_{11} \ar[r]^(.5){k_{23}}& \emptyset}}$ & Destruction complex-independent \bcat degradation (nucleus)		\\ 
	${\xymatrix {x_{11} + x_{12} \ar@<0.3ex>[r]^(.6){k_{24}}  & x_{13} \ar@<0.3ex>[l]^(.4){k_{25}}}}$ & \bcat binding to TCF (nucleus)   \\ \hline
		   	${\xymatrix {x_{2} \ar@<0.3ex>[r]^(.5){k_{26}} & x_{3} \ar@<0.3ex>[l]^(.5){k_{27}}}}$ & Shuttling of active dishevelled   \\
	${\xymatrix {x_{5} \ar@<0.3ex>[r]^(.5){k_{28}} & x_{7} \ar@<0.3ex>[l]^(.5){k_{29}}}}$ & Shuttling of inactive-form destruction complex   \\
	   	${\xymatrix {x_{10} \ar@<0.3ex>[r]^(.5){k_{30}} & x_{11} \ar@<0.3ex>[l]^(.5){k_{31}}}}$		& Shuttling of \bcat  \\ \hline
\caption{The $31$ reactions in the Wnt shuttle model.}
\label{tab-reactions}
\end{longtable}

\vspace{-0.2in}

The $31$ reactions in Table \ref{tab-reactions}
translate into a dynamical system 
$\, {\dot{{\bf x}}} = \Psi({\bf x}; {\bf k})$.
Here $\Psi$ is a vector-valued function of the
vectors of species concentrations ${\bf x}$ and
rate constants ${\bf k}$. The choice of $\Psi$ is up to the modeler.
In this paper, we assume that $\Psi$
represents the {\em law of mass action} \cite[\S 2.1.1]{Klipp}.
This is precisely what is used in \cite{MRBH} for the
Wnt shuttle model. The resulting
dynamical system is  (\ref{eq:diffeqn}).
We refer to \cite{CF,FW,KPDDG,PDSC,ShSt} and their
many references for mass action kinetics and its variants.
In summary, Table \ref{tab-reactions} translates into
the dynamical system (\ref{eq:diffeqn}) under the law of mass action.
The five relations in (\ref{eq:conservation}) constitute a basis for the
linear space of conservation relations of the model in Table \ref{tab-reactions}
assuming mass action kinetics.

We refer to $x_1, \ldots, x_{19}$ as the \emph{species concentrations}, 
$k_1, \ldots, k_{31}$ as the \emph{rate parameters}, and
 $c_1, \ldots, c_5$ as the \emph{conserved quantities}.  
We write {\bf x}, {\bf k} and {\bf c} for the vectors with these coordinates.
As is customary in algebraic geometry, we take the coordinates 
in the complex numbers $ \mathbb{C}$, or possibly in some
other algebraically closed field $K$ containing the rationals~$\mathbb{Q}$.

Our aim is to understand the relationships between ${\bf x}, {\bf k}$ and ${\bf c}$ in the
 Wnt shuttle model. To this end, we introduce the \emph{steady state variety} $\mathcal S \subset \mathbb C^{55}$.
 This is the set of all points $({\bf x},{\bf k},{\bf c})$ that satisfy the equations
   ${\dot{x}_{1}}=\ldots={\dot{x}_{19}}=0$ in (\ref{eq:diffeqn})
   along with the     five conservation laws in (\ref{eq:conservation}).
   We write our ambient affine space as
 $\mathbb C^{55} = \mathbb C_{\bf x}^{19} \times \mathbb C_{\bf k}^ {31} \times \mathbb C_{\bf c}^5$.
 This emphasizes
  the distinction between the species concentrations, rate parameters, and conserved quantities.    
   
\section{Ideals, Varieties, and Nine Points}
\label{sec3}

We write $I$ for the ideal in the polynomial ring
$\mathbb{Q}[{\bf x},{\bf k}] = 
\mathbb{Q} [x_1, \ldots x_{19}, k_1, \ldots k_{31}]$
that is generated by the $19$ polynomials $\dot{x}_i$ on the right hand side of 
(\ref{eq:diffeqn}).
Five of these generators are redundant. Indeed, the 
conservation relations (\ref{eq:conservation}) give the
following identities modulo $I$:
$$
\begin{matrix}
	 \dot{x}_1+\dot{x}_2+\dot{x}_3+\dot{x}_{14}+\dot{x}_{15} \,= \,
	  \dot{x}_8+\dot{x}_{16} \,= \,
	 \dot{x}_9+\dot{x}_{17} \,= 
	 \dot{x}_{12}+\dot{x}_{13}\, = \\
	 	 \dot{x}_4+\dot{x}_5+\dot{x}_6+\dot{x}_7+\dot{x}_{14}
		 +\dot{x}_{15}+\dot{x}_{16}+\dot{x}_{17}+\dot{x}_{18} + \dot{x}_{19} \,\,=\,\, 0 .	 \end{matrix}
$$
For instance, the polynomials $\dot{x}_{13}, \dot{x}_{15}, \dot{x}_{16},\dot{x}_{17}$ and $\dot{x}_{19}$
are redundant
because they can be expressed as negated sums of other generators of $I$.
Hence $I$ is generated by $14$ polynomials. The variety $V(I)$ lives in the $50$-dimensional affine space
$\mathbb C_{\bf x}^{19} \times \mathbb C_{\bf k}^ {31}$,
and it is isomorphic to the steady state variety $\mathcal{S} \subset \CC^{55}$.
A direct computation using the computer algebra package
{\tt Macaulay2} \cite{M2} shows that $V(I)$ has dimension $36$.
Hence the affine ideal $I$ is a complete intersection in $\mathbb{Q}[{\bf x},{\bf k}]$.
Furthermore, using {\tt Macaulay2} we can verify the following lemma.

\begin{lemma}
\label{lem:3_1}
The ideal $I$ admits the non-trivial decomposition $I = I_m  \cap  I_e$,
where $I_e = I:\langle x_1 \rangle$ and $I_m = I+\langle x_1 \rangle$,
both of these components have codimension $14$, and $I_e$ is a prime ideal.
\end{lemma}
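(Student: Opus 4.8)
The plan is to separate the three assertions---the ideal identity $I = I_m \cap I_e$, the two codimension counts, and the primality of $I_e$---and to treat them in increasing order of difficulty, using throughout that $I$ is a complete intersection of codimension $14$, hence Cohen--Macaulay and therefore unmixed (no embedded primes, all associated primes of codimension $14$).

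For the decomposition, the inclusion $I \subseteq I_m \cap I_e$ is immediate. For the reverse I would invoke the general principle
\[
  I \;=\; (I + \langle x_1\rangle) \cap (I : \langle x_1 \rangle)
  \qquad\Longleftrightarrow\qquad
  I : \langle x_1\rangle \;=\; I : \langle x_1^{\,2} \rangle .
\]
Indeed, if $g$ lies in the right-hand intersection, write $g = a + b x_1$ with $a \in I$; then $g x_1 \in I$ forces $b x_1^2 \in I$, so $b \in I : \langle x_1^2\rangle$, and stabilization of the colon at the first step gives $b x_1 \in I$, whence $g \in I$. The converse is seen by testing the element $b x_1$ for $b \in (I:\langle x_1^2\rangle)\setminus(I:\langle x_1\rangle)$. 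Thus the first claim reduces to the single colon-stabilization $I : \langle x_1\rangle = I : \langle x_1^2\rangle$, a finite Gr\"obner basis computation in $\QQ[{\bf x},{\bf k}]$.

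For the codimensions, unmixedness does most of the work. By the stabilization just shown, $I_e$ equals the saturation $I : \langle x_1\rangle^{\infty}$, i.e.\ the intersection of those primary components of $I$ whose associated prime avoids $x_1$; since $I$ is unmixed each such component has codimension $14$, so $\operatorname{codim} I_e = 14$. For $I_m = I + \langle x_1\rangle$ one has $\operatorname{codim} I_m \geq 14$ automatically, and equality holds precisely because $V(I)$ genuinely has a component contained in $\{x_1 = 0\}$---this is exactly what makes the decomposition non-trivial---which survives intersection with the hyperplane at codimension $14$. I would confirm $\operatorname{codim} I_m = 14$ directly in {\tt Macaulay2}.

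The primality of $I_e$ is the crux, and where I expect the real work to lie. The conceptual route I would take is to exhibit an explicit rational parametrization of $V(I_e)$: on the locus $x_1 \neq 0$ the steady-state relations solve linearly for the six intermediate-complex concentrations, each equal to a quadratic monomial in the remaining species divided by a sum of rate constants, namely $x_{14}=\tfrac{k_3 x_2 x_4}{k_4+k_5}$, $x_{15}=\tfrac{k_{14} x_3 x_6}{k_{15}+k_{16}}$, $x_{16}=\tfrac{k_6 x_5 x_8}{k_7+k_8}$, $x_{17}=\tfrac{k_{17} x_7 x_9}{k_{18}+k_{19}}$, $x_{18}=\tfrac{k_9 x_4 x_{10}}{k_{10}+k_{11}}$, $x_{19}=\tfrac{k_{20} x_6 x_{11}}{k_{21}+k_{22}}$, while the activation and shuttling relations express further species rationally in terms of a set of free coordinates. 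Matching the dimension count $36 = 31 + 5$, this should yield a birational map from an affine space onto $V(I_e)$, proving $V(I_e)$ irreducible and rational, so that $\sqrt{I_e}$ is prime; the remaining point, that $I_e$ is radical and hence equals this prime, I would certify with {\tt Macaulay2}. The main obstacle is therefore twofold: organizing the elimination so that the parametrization is manifestly birational, keeping explicit track of which denominators are inverted on the component $x_1 \neq 0$, and discharging the radicality/primality certificate, since a direct primary-decomposition or ideal-equality computation in $50$ variables is by far the most expensive step.
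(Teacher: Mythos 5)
The paper gives no argument for this lemma at all: it is asserted as a direct \texttt{Macaulay2} verification, so any structured proof is by definition a different route. Your first two steps are correct and genuinely more informative than the paper's bare assertion: the equivalence between the splitting $I=(I+\langle x_1\rangle)\cap(I:\langle x_1\rangle)$ and the one-step colon stabilization $I:\langle x_1\rangle=I:\langle x_1^2\rangle$ is a valid (and standard) reduction to a single Gr\"obner computation, and the codimension counts do follow from Macaulay unmixedness of the codimension-$14$ complete intersection $I$, together with the existence of a component of $V(I)$ inside $\{x_1=0\}$.

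The gap is in the primality step, and it is structural rather than a matter of organizing the computation. Your parametrization solves for $x_{14},\ldots,x_{19}$ and the remaining species only on the locus where $x_1\neq 0$ \emph{and} all the denominators $k_4+k_5,\;k_7+k_8,\;k_{10}+k_{11},\;k_{15}+k_{16},\;k_{18}+k_{19},\;k_{21}+k_{22},\;k_2,k_{27},k_{29},\ldots$ are nonzero. Such a map can only show that $V(I:\langle x_1\rangle)$ has a unique component dominating $\mathbb{C}^{31}_{\mathbf{k}}$; saturating by $x_1$ alone does nothing to remove components of $V(I)$ lying over the parameter loci where those denominators vanish, and in $\mathbb{Q}[\mathbf{x},\mathbf{k}]$ such components are genuinely present. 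The paper records this immediately after the lemma: the colon-type ideal (the one presented with the binomial-style generators) ``is not prime in $\mathbb{Q}[\mathbf{x},\mathbf{k}]$,'' the variable $k_1$ is a zerodivisor modulo it, and even after removing that factor it ``still has several associated primes,'' all but one containing rate constants $k_i$. Primality of the unique dominant component is exactly the content of Proposition~3.2 about $J_m=\widetilde{I_m}\cap\mathbb{Q}[\mathbf{x},\mathbf{k}]$, where scalars are extended to $\mathbb{Q}(\mathbf{k})$ precisely so that all your denominators become units; that statement is strictly weaker than primality of the colon ideal itself, and cannot be upgraded to it by a radicality certificate, since a radical ideal may well be an intersection of several primes.

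There is a second, related problem: the ideal the paper actually proves prime is the extinction component, whose displayed generators include $x_1,x_2,x_3,x_5,x_7,x_{14},x_{15},x_{16},x_{17}$, i.e.\ a component contained in the coordinate subspace $\{x_1=0\}$. Comparing with the lemma shows that the two formulas in the statement must be read with the colon and the sum interchanged: as printed they are inconsistent with the displayed generators, because $x_1\in I:\langle x_1\rangle$ would force $x_1^2\in I$, which fails since $V(I)$ contains points with $x_1\neq 0$. Consequently your parametrization on the locus $x_1\neq 0$ is aimed at the wrong component altogether: the prime ideal in question is cut out inside $\{x_1=x_2=\cdots=0\}$, and its primality must be established by different means (reducing modulo the nine coordinate generators and analyzing the five remaining polynomials), not by inverting $x_1$.
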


The ideal $I_m$ is called the {\em main component}, while $I_e$ is called the {\em extinction component}, since it reflects those steady states where a number of the reactants ``run out."  
Both of these ideals live in $\mathbb{Q}[{\bf x},{\bf k}]$, 
and we now present explicit generators. 
The extinction component equals
$$ \begin{matrix}
I_e = \langle x_1,x_2,x_3,x_5,x_7,x_{14},x_{15},x_{16},x_{17},  
k_{30}x_{10}-(k_{23}+k_{31})x_{11}-k_{22}x_{19}, & \\
k_{13}x_{10}+k_{23}x_{11}+k_{11}x_{18}+k_{22}x_{19}-k_{12},
 k_{24}x_{11}x_{12}-k_{25}x_{13}, & \\
k_{20}x_{6}x_{11} - (k_{21} + k_{22})x_{19},
k_{9}x_{4}x_{10} - (k_{10} + k_{11})x_{18}
\rangle. &
\end{matrix} $$
The ideal  $I_e$ is found to be prime in $\mathbb{Q}[{\bf x},{\bf k}]$.
The main component equals
$$ \begin{matrix}
I_m \,\,=\,\, \langle k_{16}x_{15}-k_{19}x_{17},  
k_{5}x_{14}-k_{8}x_{16},  
k_{30}x_{10}-(k_{23}+k_{31})x_{11}-k_{22}x_{19},\qquad & \\ 
k_{13}x_{10}+k_{23}x_{11} + 
     k_{11}x_{18}+k_{22}x_{19}-k_{12},
     k_{28}x_{5}-k_{29}x_{7}, 
     k_{26}x_{2}-k_{27}x_{3}, & \\
     k_{1}x_{1}-k_{2}x_{2},  
     k_{24}x_{11}x_{12}-k_{25}x_{13},  
      k_{20}x_{6}x_{11}-(k_{21}+k_{22})x_{19}, & \\
     k_{9}x_{4}x_{10}-(k_{10}+k_{11})x_{18},
     k_{17}x_{7}x_{9}-(k_{18}+k_{19})x_{17},
     k_{6}x_{5}
     x_{8}-(k_{7}+k_{8})x_{16}, &\\
     k_{14}x_{3}x_{6}-k_{15}x_{15}-k_{19}x_{17}, 
     k_{3}x_{2}x_{4}-k_{4}x_{14}-k_{8}x_{16}, & \\ 
     (k_{4}k_{6}k_{8}k_{
     14}k_{16}k_{18}k_{26}k_{29}  
     +k_{5}k_{6}k_{8}k_{14}k_{16}k_{18}k_{26}k_{29}+ & \\
     k_{4}k_{6}k_{8}k_{14}k_{16}k_{19}k_{26}k_{
     29}+
     k_{5}k_{6}k_{8}k_{14}k_{16}k_{19}k_{26}k_{29})k_{1}x_{6}x_{8} & \\
     -(k_{3}k_{5}k_{7}k_{15}k_{17}k_{19}k_{27}k_{28}+
     k_{
     3}k_{5}k_{8}k_{15}k_{17}k_{19}k_{27}k_{28} &\\  +k_{3}k_{5}k_{7}k_{16}k_{17}k_{19}k_{27}k_{28}+  k_{3}k_{5}k_{8}k_{16}k_{17}
     k_{19}k_{27}k_{28})k_{1}x_{4}x_{9} \rangle .&
     \end{matrix}
     $$
This ideal is not prime in $\mathbb{Q}[{\bf x},{\bf k}]$.
For instance,  the variable $k_1$ is a zerodivisor modulo $I_m$,
as seen from the 
last generator. Removing the factor $k_1$ from the
last generator yields the quotient ideal $I_m:\langle k_1 \rangle$. However, even that ideal
 still has several associated primes. 
All of these prime ideals, except for one, contain some of the rate constants $k_i$.

That special component is characterized in the following proposition.
Given any ideal $J \subset \mathbb{Q}[\mathbf{x},\mathbf{k}]$, we write
$\widetilde{J} =  \mathbb{Q}(\mathbf{k}) [{\bf x}] J$ for its extension
to the polynomial ring $\mathbb{Q}(\mathbf{k}) [{\bf x}]$ in the unknowns $x_1,\ldots,x_{19}$ over the field
of rational functions in the parameters $k_1,\ldots,k_{31}$.

\begin{prop} 
\label{prop:3_2} The ideal $J_m = \widetilde{I_m} \,\cap\, \mathbb{Q}[{\bf x},{\bf k}]$
is prime. Its irreducible variety $V(J_m) \subset \CC^{50}$ has dimension $36$; it is the unique
component of $V(I_m)$ that maps dominantly onto $\CC_{\bf k}^{31}$.
\end{prop}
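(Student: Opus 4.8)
The plan is to recognize the passage from $I_m$ to $J_m$ as a localization, and to reduce the whole statement to a single primality assertion over the function field $\mathbb{Q}(\mathbf{k})$. Set $S = \mathbb{Q}[\mathbf{k}] \setminus \{0\}$, a multiplicative subset of $\mathbb{Q}[\mathbf{x},\mathbf{k}]$. Then $\mathbb{Q}(\mathbf{k})[\mathbf{x}] = S^{-1}\mathbb{Q}[\mathbf{x},\mathbf{k}]$, the extension $\widetilde{I_m}$ is exactly $S^{-1}I_m$, and $J_m = \widetilde{I_m} \cap \mathbb{Q}[\mathbf{x},\mathbf{k}]$ is the contraction of $\widetilde{I_m}$ along the (injective) localization map. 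Since the contraction of a prime ideal is always prime, it suffices to prove that $\widetilde{I_m}$ is a prime ideal of $\mathbb{Q}(\mathbf{k})[\mathbf{x}]$; primality of $J_m$ then follows at once.

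To prove that $\widetilde{I_m}$ is prime I would exploit the nearly triangular shape of the displayed generators. Over the field $\mathbb{Q}(\mathbf{k})$ every rate constant is a unit, so eleven of the generators solve, one variable at a time and with unit leading coefficient, for $x_1, x_2, x_5, x_{10}, x_{13}, x_{14}, x_{15}, x_{16}, x_{17}, x_{18}, x_{19}$ as explicit rational expressions in the remaining unknowns. Performing these substitutions turns the last large generator into a scalar multiple of $x_6 x_8 - (\mathrm{const})\,x_4 x_9$, and a short computation shows this is already a consequence of the two generators $k_{14}x_3 x_6 - k_{15}x_{15}-k_{19}x_{17}$ and $k_3 x_2 x_4 - k_4 x_{14} - k_8 x_{16}$ once $x_7$ is inverted: these reduce to $x_3 x_4 = c_1 x_7 x_8$ and $x_3 x_6 = c_2 x_7 x_9$, and cross-multiplying gives $x_7\bigl(x_6 x_8 - (c_2/c_1)x_4 x_9\bigr)=0$. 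What remains is the single inhomogeneous generator coming from $\beta$-catenin synthesis, which after the substitutions is \emph{linear} in $x_{11}$ and so solves for $x_{11}$ rationally in $x_4, x_6$. The upshot is a birational parametrization of $V(\widetilde{I_m})$ by the five free coordinates $x_3, x_7, x_8, x_9, x_{12}$: the coordinate ring $\mathbb{Q}(\mathbf{k})[\mathbf{x}]/\widetilde{I_m}$ embeds into the rational function field $\mathbb{Q}(\mathbf{k})(x_3, x_7, x_8, x_9, x_{12})$, hence is a domain, and $\widetilde{I_m}$ is prime. The same parametrization records that $\dim V(\widetilde{I_m}) = 5$.

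The remaining assertions are then formal. The variety $V(J_m)$ is birational to $\mathbb{A}^{31}_{\mathbf{k}} \times \mathbb{A}^5$, since all $31$ rate constants together with the five free coordinates are algebraically independent modulo $J_m$, so $\dim V(J_m) = 31+5 = 36$, matching the \texttt{Macaulay2} computation. For uniqueness, a component $V(P)$ of $V(I_m)$ projects dominantly onto $\mathbb{C}^{31}_{\mathbf{k}}$ exactly when $P \cap \mathbb{Q}[\mathbf{k}] = (0)$, i.e. when $P$ is disjoint from $S$. The primes of $\mathbb{Q}[\mathbf{x},\mathbf{k}]$ disjoint from $S$ correspond bijectively, via $S^{-1}(-)$, to the primes of $\mathbb{Q}(\mathbf{k})[\mathbf{x}]$; as $\widetilde{I_m}$ is prime it is its own unique minimal prime, so $I_m$ has exactly one minimal prime disjoint from $S$, namely $J_m$, and it is a genuine (non-embedded) component because any prime contained in $J_m$ is again disjoint from $S$ and hence equals $J_m$. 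Combined with the observation already made that every other associated prime of $I_m$ contains some $k_i$ and thus meets $S$, this shows $V(J_m)$ is the unique component mapping dominantly onto $\mathbb{C}^{31}_{\mathbf{k}}$.

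The main obstacle is the primality of $\widetilde{I_m}$, and within it the verification that the final bulky generator contributes nothing new. The triangular eliminations are routine, but one must check honestly that the listed generators cut out a \emph{radical} ideal equal to the kernel of the parametrization, so that no extra or embedded component hides inside the localizations by $x_3$, by $x_7$, and by the denominator solving for $x_{11}$. This is exactly the point where a Gröbner-basis certificate in \texttt{Macaulay2} over $\mathbb{Q}(\mathbf{k})$ does the decisive work.
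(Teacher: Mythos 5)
Your proposal is correct, and at the decisive step it coincides with the paper's proof: both arguments stand or fall with the primality of $\widetilde{I_m}$ in $\mathbb{Q}(\mathbf{k})[\mathbf{x}]$, which in the end is certified by a symbolic (Gr\"obner basis) computation, and both then obtain primality of $J_m$ for free, since contraction along the localization $\mathbb{Q}[\mathbf{x},\mathbf{k}] \to S^{-1}\mathbb{Q}[\mathbf{x},\mathbf{k}] = \mathbb{Q}(\mathbf{k})[\mathbf{x}]$ preserves primes. Where you genuinely differ is in the supporting arguments, and there your route buys something. For the dimension, the paper invokes the complete-intersection statement of Lemma \ref{lem:3_1} together with Krull's principal ideal theorem, whereas you read off $\dim V(J_m) = 31+5 = 36$ from the birational parametrization by $(\mathbf{k},x_3,x_7,x_8,x_9,x_{12})$; in effect you exhibit a base of base degree $1$ for the algebraic matroid of Section \ref{sec5}, in the same spirit as Example \ref{ex:param}, and compute a transcendence degree. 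For uniqueness of the dominant component, your correspondence between minimal primes of $I_m$ disjoint from $S$ and minimal primes of $\widetilde{I_m}$ is more explicit than the paper, whose proof only establishes dominance of $V(J_m)$ and leaves uniqueness resting on the earlier remark that every other associated prime contains some $k_i$. One caution about your elimination sketch: by itself it proves only that the localization of $\mathbb{Q}(\mathbf{k})[\mathbf{x}]/\widetilde{I_m}$ at $x_3 x_7$ and at the denominator solving for $x_{11}$ is a domain, i.e.\ that the corresponding saturation of $\widetilde{I_m}$ is prime, not that $\widetilde{I_m}$ itself is prime (compare $\langle xy \rangle$, which becomes prime after inverting $x$); you acknowledge exactly this and defer to the computation, so your hand argument explains the geometry and the dimension but does not replace the computational certificate that both you and the paper ultimately rely on. Note also that your cross-multiplication step tacitly requires the coefficient ratio in the last bulky generator of $I_m$ to equal the constant produced by combining the two binomial relations; this does hold (it is why that generator lies in $I_m$ at all), but it is a point your write-up should verify rather than assume.
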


\begin{proof}
The ideal $\widetilde{I_m}$ has the same generators as $I_m$
but now regarded as polynomials in ${\bf x}$ with coefficients in 
$\mathbb{Q}(\mathbf{k})$. Symbolic computation in the  ring
$\mathbb{Q}(\mathbf{k}) [{\bf x}] $ reveals that $\widetilde{I_m}$
is a prime ideal. This implies that $J_m$ is a prime ideal in
$\mathbb{Q}[{\bf x},{\bf k}]$, and hence  $V(J_m)$ is irreducible. The dimension statement follows from
the result of Lemma \ref{lem:3_1} that $I_m$ is a complete intersection.
This ensures that $V(I_m)$ has no lower-dimensional components, by Krull's Principal Ideal Theorem.
Finally, $V(J_m)$ maps dominantly onto 
$\CC_{\bf k}^{31}$
because $J_m\,\cap\, \mathbb{Q}[\mathbf{k} ] = \{0\}$.
\end{proof}

\begin{cor}
The ideal $\widetilde{I}$ is radical, and it is the intersection of two primes in
$\mathbb{Q}(\mathbf{k}) [{\bf x}] $:
\begin{equation}
\label{eq:twocomponents}
 \widetilde{I} \,\, = \,\,  \widetilde{I_e} \,\cap\,\widetilde{I_m}.
\end{equation}
\end{cor}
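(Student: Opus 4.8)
The plan is to obtain the decomposition directly from Lemma~\ref{lem:3_1} and the primality already secured in the proof of Proposition~\ref{prop:3_2}, using nothing deeper than the exactness of localization. Set $S = \mathbb{Q}[\mathbf{k}]\setminus\{0\}$, a multiplicatively closed subset of the domain $\mathbb{Q}[{\bf x},{\bf k}]$ not containing $0$. Then $S^{-1}\mathbb{Q}[{\bf x},{\bf k}] = \mathbb{Q}(\mathbf{k})[{\bf x}]$, and for every ideal $J$ the extension $\widetilde{J} = \mathbb{Q}(\mathbf{k})[{\bf x}]\,J$ is exactly the localized ideal $S^{-1}J$. So the whole statement is really a statement about the localization map $\mathbb{Q}[{\bf x},{\bf k}] \to \mathbb{Q}(\mathbf{k})[{\bf x}]$.

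First I would record that extension to $\mathbb{Q}(\mathbf{k})[{\bf x}]$ commutes with finite intersections of ideals: since $S^{-1}$ is an exact functor, localizing the inclusion $\,0 \to \mathbb{Q}[{\bf x},{\bf k}]/(I_m\cap I_e) \hookrightarrow \mathbb{Q}[{\bf x},{\bf k}]/I_m \oplus \mathbb{Q}[{\bf x},{\bf k}]/I_e\,$ yields $S^{-1}(I_m\cap I_e) = S^{-1}I_m \cap S^{-1}I_e$. Applying this to the identity $I = I_m\cap I_e$ of Lemma~\ref{lem:3_1} gives $\widetilde{I} = \widetilde{I_m}\cap\widetilde{I_e}$, which is precisely equation~(\ref{eq:twocomponents}).

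Next I would check that both factors are prime in $\mathbb{Q}(\mathbf{k})[{\bf x}]$. For $\widetilde{I_m}$ this is exactly the symbolic computation carried out in the proof of Proposition~\ref{prop:3_2}. For $\widetilde{I_e}$ I would invoke the general principle that, for a prime $P$, the localization $S^{-1}P$ is again prime when $P\cap S = \emptyset$ and is the unit ideal otherwise; since $I_e$ is prime by Lemma~\ref{lem:3_1}, it suffices to verify $I_e\cap\mathbb{Q}[\mathbf{k}] = \{0\}$, equivalently that $V(I_e)$ maps dominantly onto $\mathbb{C}^{31}_{\bf k}$, in complete parallel to the final line of the proof of Proposition~\ref{prop:3_2}. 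This can be read off the explicit generators: nine of them set $x_1,x_2,x_3,x_5,x_7,x_{14},x_{15},x_{16},x_{17}$ to zero, while the remaining five constrain $x_4,x_6,x_8,x_9,x_{10},x_{11},x_{12},x_{13},x_{18},x_{19}$ by relations that are solvable over $\mathbb{Q}(\mathbf{k})$ and contain no nonzero polynomial in the $k_i$ alone; hence $\widetilde{I_e}$ is proper and $I_e\cap\mathbb{Q}[\mathbf{k}] = \{0\}$, so $\widetilde{I_e}$ is prime. The two primes are moreover distinct, since $x_2\in\widetilde{I_e}$ whereas $x_2\notin\widetilde{I_m}$ (as $V(J_m)$ projects dominantly onto $\mathbb{C}^{31}_{\bf k}$, the coordinate $x_2$ cannot vanish identically on it).

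Finally, a finite intersection of prime ideals is radical, so $\widetilde{I} = \widetilde{I_e}\cap\widetilde{I_m}$ is radical, completing the argument. The only step needing genuine verification rather than formal manipulation is the dominance claim $I_e\cap\mathbb{Q}[\mathbf{k}] = \{0\}$; a pure dimension count does not settle it (the fiber-dimension inequality $36 \le 30 + 19$ is not contradictory), so I expect this to be the main, if modest, obstacle. I would discharge it either by the generator inspection sketched above or, more safely, by a short elimination computation mirroring the corresponding step for $J_m$ in Proposition~\ref{prop:3_2}.
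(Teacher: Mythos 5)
Your proof is correct and takes essentially the same route as the paper, whose proof of the corollary is a one-liner citing Lemma~\ref{lem:3_1} (for $I = I_m \cap I_e$ and primality of $I_e$) and Proposition~\ref{prop:3_2} (for primality of $\widetilde{I_m}$). Your extra care with the localization formalities --- in particular verifying that $\widetilde{I_e}$ stays proper, i.e.\ $I_e \cap \mathbb{Q}[\mathbf{k}] = \{0\}$, which a bare dimension count indeed cannot settle --- fills in details the paper leaves implicit, but it is the same argument.
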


\begin{proof}
This follows directly from Proposition \ref{prop:3_2}
and the primality of $I_e$ in Lemma \ref{lem:3_1}.
\end{proof}

The decomposition has the following geometric interpretation.
We now work over the field $K = \overline{\mathbb{Q}({\bf k})}$.
All rate constants are taken to be generic. Then
$V(\widetilde{I})$ is the
$5$-dimensional variety of all steady states
in $K^{19}$. This variety is the union of two irreducible
components,
$$ V(\widetilde{I}) \,\, = \,\, V(\widetilde{I_e})\,\cup\, V(\widetilde{I_m}), $$
where each component is $5$-dimensional. The first component
lies inside the $10$-dimensional coordinate subspace
$V(x_1,x_2,x_3,x_5,x_7,x_{14},x_{15},x_{16},x_{17})$. 
Hence it is disjoint from the hyperplane defined by the
first conservation relation $\,x_1 + x_2 + x_3 + x_{14} + x_{15} = c_1$.
In other words, $V(\widetilde{I_e})$ is mapped into a coordinate hyperplane
under the map $\chi : K^{19} \rightarrow K^5, {\bf x} \mapsto {\bf c}$.

On the other hand, the second component $V(\widetilde{I_m})$ maps dominantly
onto $K^5$ under $\chi$. Theorem  \ref{thm:nine} states that
the generic fiber of this map consists of $9$ reduced points.
Equivalently,
\begin{equation}
\label{eq:chichi}
 \chi^{-1}({\bf c}) \cap V(\widetilde{I}) \,= \,
\chi^{-1}({\bf c}) \cap V(\widetilde{I_m}) 
\end{equation}
is a set of nine points in $K^{19}$.
We are now prepared to argue that this is indeed the case.

\begin{proof}[Computational Proof of Theorem \ref{thm:nine}] 
We consider the ideal of the variety (\ref{eq:chichi}) in the polynomial ring
$\mathbb{Q}({\bf k},{\bf c})[{\bf x}]$. This polynomial ring has $19$ variables,
and all $36$ parameters are now scalars in the coefficient field. This ideal is generated by
the right hand sides of (\ref{eq:diffeqn}) and (\ref{eq:conservation}).
Performing a Gr\"obner basis computation in this polynomial ring
verifies that our ideal is zero-dimensional and has length $9$. Hence
(\ref{eq:chichi}) is a reduced affine scheme of length $9$ in $K^{19}$.

Fast numerical verification of this result is obtained by
replacing the coordinates of ${\bf k}$ and~${\bf c}$ with generic (random rational) values.
In {\tt Macaulay2} one finds, with probability~$1$,
that the resulting ideals in $\mathbb{Q}[{\bf x]}$ are radical of length $9$.
   We also verified this result via {\em numerical algebraic geometry}, using
    the two software packages  {\tt Bertini} \cite{BHSW} and {\tt PHCpack} \cite{Ver}.
\end{proof}

\section{Multistationarity and its Discriminant}
\label{sec4}

This section centers around Question 4 from the Introduction:
{\em For what real positive rate parameters and conserved quantities 
does the system exhibit multistationarity?} 
This is commonly asked about biochemical reaction networks and about dynamical systems in general.  

Mathematically, this is a problem of {\em real} algebraic geometry.
Writing $\mathcal{S}$ for the steady state variety in $\CC^{55}$,
 we are interested in the fibers of the map
$\pi_{{\bf k},{\bf c}} : \mathcal{S} \, \cap \, \RR_{>0}^{55} \rightarrow 
\RR_{>0,{\bf k}}^{31} \times \RR_{>0,{\bf c}}^5$.
According to  Theorem \ref{thm:nine}, the general fiber
consists of $9$ {\em complex} points ${\bf x} \in \CC_{\bf x}^{19}$,
when the map  $\pi_{{\bf k},{\bf c}} $ is taken over $\CC$. But here we take it 
over the reals $\RR$ or over the positive reals $\RR_{> 0}$.

In our application to  biology, we only care about concentration vectors
${\bf x} $ whose coordinates are real and positive. Thus we wish to stratify
$\RR_{>0,{\bf k}}^{31} \times \RR_{>0,{\bf c}}^5$ according to the cardinality of
\begin{equation}
\label{eq:pipi}
 \pi^{-1}_{{\bf k},{\bf c}} ({\bf k},{\bf c}) \, = \, \bigl\{ \,
 ({\bf x}, {\bf k}', {\bf c}') \in  \mathcal{S} \, \cap \, \RR_{>0}^{55} \ : \  \
  {\bf k}' = {\bf k} \,\,\hbox{and} \,\, {\bf c}' = {\bf c}
  \bigr\}. 
 \end{equation}
This stratification comes from a decomposition of the $36$-dimensional orthant
$\RR_{>0,{\bf k}}^{31} \times \RR_{>0,{\bf c}}^5$ into
connected open semialgebraic subsets. The walls in this decomposition are 
given by the {\em discriminant} $\Delta$,  a giant polynomial in
the $36$ unknowns $({\bf k},{\bf c})$ that is to be defined later.

We begin with the following result on what is possible with regard to real positive solutions.

\begin{thm}\label{thm:posrealsols}
Consider the polynomial system in
(\ref{eq:diffeqn})--(\ref{eq:conservation}) 
where all parameters $k_i$ and $c_j$ are positive real numbers.
The set (\ref{eq:pipi}) of positive real solutions can have $1,2$, or $3$ elements.
\end{thm}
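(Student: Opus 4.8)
The plan is to reduce the count of positive solutions to the positive roots of a single univariate polynomial, to derive the upper bound of three from that reduction, and to certify the values $1,2,3$ by explicit examples. As a first step I would observe that no positive solution lies on the extinction component: since $x_1 \in I_e$, every point of $V(\widetilde{I_e})$ has $x_1 = 0$, which violates the requirement $x_1 > 0$ in (\ref{eq:pipi}). By the decomposition (\ref{eq:twocomponents}), every positive steady state therefore lies on the main component $V(\widetilde{I_m})$. Hence, combining Theorem \ref{thm:nine} with Proposition \ref{prop:3_2}, the task is to count how many of the nine points in the generic fiber of $\chi$ over $V(\widetilde{I_m})$ are simultaneously real and strictly positive.

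Next I would exploit the many linear generators of $I_m$ — such as $k_1 x_1 = k_2 x_2$, $k_{26} x_2 = k_{27} x_3$, $k_{28} x_5 = k_{29} x_7$, $k_5 x_{14} = k_8 x_{16}$, and $k_{16}x_{15}=k_{19}x_{17}$ — to solve for all but one of the species concentrations as rational functions of a single chosen coordinate $\xi$, a convenient choice being a $\beta$-catenin concentration such as $x_{11}$, with coefficients that are monomials in ${\bf k}$. Substituting these expressions into the remaining quadratic steady-state relations and into the conservation laws (\ref{eq:conservation}) collapses the entire system, on the main component, to a single univariate equation $p(\xi) = 0$ together with the positivity requirements $r_j(\xi) > 0$ for the rational functions $r_j$ expressing the other coordinates. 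An elimination (Gr\"obner basis) computation in $\mathbb{Q}({\bf k},{\bf c})[{\bf x}]$ produces $p$ explicitly, and by Theorem \ref{thm:nine} its degree is at most $9$.

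The crux is the upper bound: at most three real roots of $p$ are admissible. The route I would take is to couple the equation $p(\xi)=0$ with the sign conditions $r_j(\xi) > 0$. These sign conditions confine the admissible $\xi$ to an interval (or a small union of intervals) whose endpoints depend on the positive parameters, and I would argue that on this admissible region $p$ can vanish at most three times — either because the effective restriction of $p$ there is governed by a cubic, or via a Descartes' rule of signs or Sturm sequence count that holds uniformly in ${\bf k}$ and ${\bf c}$. Turning the numerically observed bound ``at most three'' into a statement valid for \emph{all} positive $({\bf k},{\bf c})$ is the main obstacle, since the signs of the coefficients of $p$, together with the endpoints of the admissible region, depend in an intricate way on the $36$ parameters.

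Finally, to show that each cardinality $1,2,3$ is attained, I would exhibit three explicit positive parameter vectors $({\bf k},{\bf c})$ and verify by direct computation — symbolically in {\tt Macaulay2} \cite{M2}, and numerically in {\tt Bertini} \cite{BHSW} and {\tt PHCpack} \cite{Ver} as in the proof of Theorem \ref{thm:nine} — that the corresponding zero-dimensional systems have exactly one, two, and three positive real solutions. The example with three positive solutions shows both that the bound is sharp and that the model is genuinely multistationary, while the example with a single solution exhibits the monostationary regime.
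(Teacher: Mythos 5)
You have misread what the theorem asserts. The statement ``can have $1,2$, or $3$ elements'' is a claim of attainability: there exist positive parameter vectors $({\bf k},{\bf c})$ realizing each of the three cardinalities. It is not the claim that the number of positive solutions is always at most three; indeed, the remark immediately following this theorem in the paper states explicitly that it is unknown whether more than three positive solutions can occur. Consequently, what you call ``the crux'' of your proposal --- reducing to a univariate eliminant $p(\xi)$ and proving, uniformly in the $36$ parameters, that at most three of its roots satisfy the positivity conditions --- is both unnecessary for the theorem and, as you yourself concede, not something your sketch establishes: a Descartes or Sturm count ``that holds uniformly in ${\bf k}$ and ${\bf c}$'' is precisely the real-root-classification problem that the paper declares infeasible for $36$ parameters and leaves open. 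So as written your proposal does not close, but the step that fails is a step the theorem never asked for.

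The part that is actually needed is your final paragraph, and there your plan coincides with the paper's proof: the paper exhibits one explicit positive parameter vector for which the $9$ complex solutions guaranteed by Theorem \ref{thm:nine} are all real and exactly three are positive, and a second explicit vector for which exactly one is positive, both certified numerically with {\tt Bertini}. The only difference is how the intermediate value $2$ is obtained: the paper connects the two parameter vectors by a general curve in $\RR_{>0}^{36}$ and examines in-between points to find a system with exactly two positive solutions (positivity is gained or lost one solution at a time as a coordinate crosses zero along the curve), whereas you propose producing a third explicit vector; either is acceptable provided the example is actually exhibited and certified. Your preliminary observation that positive solutions avoid the extinction component $V(\widetilde{I_e})$ because $x_1 \in I_e$ is correct and consistent with the paper, though the paper's argument does not need it.
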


\begin{proof}
For random choices of  $({\bf k},{\bf c}) = (k_1, \ldots, k_{31}, c_1, \ldots, c_5)$ in  the orthant $\RR_{>0}^{36}$,
our polynomial system has $9$ complex solutions, by Theorem \ref{thm:nine}.
For the following  two special choices of the $36$ parameter values, all $9$ solutions are real.
First, take $({\bf k},{\bf c})$ to be the vector
\begin{small}
$$ \begin{matrix}
&( 1.7182818, 53.2659, 3.4134082, 0.61409879, 0.61409879, 3.4134082, 
  0.98168436,0.98168436,  \\ & 
  92.331732, 0.86466471, 79.9512906, 97.932525, 1,
  3.2654672, 0.61699064, 0.61699064,\\ &  37.913879,  0.86466471, 0.86466471, 
  4.7267833, 0.17182818, 0.68292191,  1, 0.55950727, \\ &  1.0117639, 1.7182818, 
  1.7182818, 0.99326205, 0.99326205, 5.9744464, 1, 4.9951026,  \\ & 16.4733784, 
1.6006340000000001, 1.2089126, 2.7756596399999998).
  \end{matrix}
  $$
\end{small}
  The resulting system has three positive solutions ${\bf x} \in \RR_{>0}^{19}$. Next, 
  let $({\bf k}',{\bf c}')$ be the vector 
  \begin{small}
  $$ \begin{matrix}
&(0.948166, 7.45086, 5.72974, 3.96947, 7.21145, 7.8761, 1.87614,  8.11372, 6.21862, 5.24801, \\ &
3.10707,  1.08146, 5.22133, 5.84158, .911392,  4.28788, 4.81201, 9.67849, 1.34452, 7.38597,
\\ & 6.64451, 7.10229, 8.57942, 
5.79076, 6.33244, 1.53916, 1.39658, 0.81673, 5.8434, 3.86223,  \\ & 
7.22696, \,\,1.45438, 3.36482, 6.06453, 4.82045, 3.6014).
 \end{matrix}
 $$
 \end{small}Here, one solution to our system is positive.
 By connecting the two parameter points above with a general curve in $\RR_{>0}^{36}$,
 and by examining in-between points $({\bf k}'',{\bf c}'')$, we can construct a system
 with two positive solutions.
 All computations were carried out using {\tt Bertini}  \cite{BHSW}.
\end{proof}

\begin{remark}
At present, we do not know whether the number of real positive solutions
can be larger than three. We suspect that this is impossible, but we currently cannot prove it.
\end{remark}

The difficulty lies in the fact that the stratification of
$\RR_{> 0}^{36}$ is extremely complicated. In computer algebra,
the derivation of such stratifications
is known as the problem of {\em real root classification}.
For a sample of recent studies in this direction see \cite{CDMXX, FMRS, RT}.
Real root classification is challenging
even when the number of parameters is $3$ or $4$; clearly, $36$ parameters is out of the question. The stratification
of $\RR_{>0}^{36}$ by behavior of (\ref{eq:pipi})
has way too many cells.

While symbolic techniques for real root classification are infeasible
for our system, we can use numerical algebraic geometry \cite{gross15}
 to gain insight into the stratification of $\RR_{> 0}^{36}$. 
   \emph{Coefficient-parameter homotopies} \cite{MS} can solve the 
steady state polynomial system (\ref{eq:diffeqn})-(\ref{eq:conservation})
   for multiple choices of $({\bf k}, {\bf c})$ quickly. 
  For our computations we use {\tt Bertini.m2}.
  This is the {\tt Bertini} interface for {\tt Macaulay2}, 
as described in \cite{BGLR}.
 Each system has $19$ equations in $19$ unknowns and,
 for random $({\bf k},{\bf c})$, each system has $9$ complex solutions.
 Such a system can be solved in less than one second using
the {\tt bertiniParameterHomotopy} function from {\tt Bertini.m2}.

Below we describe the following experiment.
We sample $10,000$ parameter vectors $({\bf k},{\bf c})$ 
from two different probability distributions on $\mathbb{R}_{> 0}^{36}$.
In each case we report  the observed
frequencies for the number of real solutions and 
 number of positive solutions.  We then follow these experiments with a specialized sampling scheme
 for testing numerical robustness.

\emph{Uniform sampling scheme:}
Here we choose $({\bf k}, {\bf c} )$ uniformly from
the cube $(0.0, 100.0)^{36}$.  Sampling 10,000 parameter vectors from this scheme and solving the steady state system for each of these parameter vectors in {\tt Bertini}, we obtained $9,992$ solutions sets that contained $9$
complex points. Solution sets with less than $9$ points occur when some paths in the coefficient-parameter homotopy fail. We call solution sets with $9$ solutions {\em good}.



\emph{Integer sampling scheme:}  Here we select $({\bf k}, {\bf c})$ uniformly from $\{1,2,3\} ^{36}$.  Sampling 10,000 parameter vectors according to this scheme and solving the corresponding steady state system
returned $9,963$ good solution sets.  Below is a table that records how many of the good solution sets 
had $9,7,5,3$ real solutions; all solution sets had 1 positive real solution.


\begin{longtable}{|l||c|c|c|c|}
\hline
\# of real solutions &			 9 & 	     7 		& 5 		& 3 	\\ \hline
Freq. for Uniform Sampling &  	5,760 & 3,675	& 544  	& 13\\
Freq. for Integer Sampling &  	 2,138&  5,181	& 2,522 	& 122  \\
\hline
\caption{Frequencies for the sampling schemes.}
\end{longtable}

\vspace{-6mm}

These computations indicate that for most parameter vectors in $(0, 100)^{36}$ we will see only one positive solution to the steady state system. But while the set of parameter vectors that result in multiple steady states is not very large, we can give evidence that multistationarity is preserved under small perturbations.
This is our next point.

\emph{Testing Robustness:} Let $({\bf k^*}, {\bf c^*})$ be the first point in the proof of Theorem \ref{thm:posrealsols}.  For each index $i \in \{1,\ldots,19\}$ we choose $y_i$ uniformly from $(-0.03\cdot k^*_i, 0.03\cdot k^*_i)$ then set $k_i=k_i^*+y_i$.  We ran the same process for the $c_i$. Sampling $10,000$ parameter vectors this way and solving the corresponding steady state systems returned $10,000$
good solution sets, as follows:

\begin{longtable}{|c|c|c|c|c|}
\hline
\# of real solutions & Freq. & & \# of pos. solutions & Freq. \\
\hline
9 & 9,879 & & 3 &9,879\\
7 & 121 & & 1 & 121\\
\hline
\caption{Frequencies for testing robustness scheme.}
\end{longtable}

In the remainder of this section, we 
properly define the discriminant $\Delta$ that separates  the various 
strata in $\mathbb{R}^{36}_{>0}$.
 Let $\Delta_{\rm int}$ denote the Zariski closure in
$ \mathbb C_{\bf k}^{31} \times \mathbb C_{\bf c}^ {5}$ 
of all parameter vectors $({\bf k},{\bf c})$ for which
 (\ref{eq:diffeqn})--(\ref{eq:conservation})  does not have
$9$ isolated complex solutions and there are no solutions
with $x_i = 0$ for some $i$. It can be shown that 
$\Delta_{\rm int}$ is a hypersurface that is defined over $\mathbb{Q}$,
so it is given by a unique (up to sign) irreducible
squarefree polynomial in $\mathbb{Z}[{\bf k},{\bf c}]$.
We use the symbol $\Delta_{\rm int}$ also for that polynomial.
To be precise, $\Delta_{\rm int}$ is the discriminant
of a number field $L$ 
with $K \supset L \supset \mathbb{Q}$, namely $L$ is
the field of definition of the finite $K$-scheme~(\ref{eq:chichi}).

Next, for any $i \in \{1,2,\ldots,19\}$ consider the
intersection of the steady state variety $\mathcal{S}$
with the hyperplane $\{x_i = 0\}$.
The Zariski closure of the image of $\mathcal{S} \cap \{x_i = 0\}$
under the map $\pi_{{\bf k},{\bf c}}$ is a hypersurface
in $ \mathbb C_{\bf k}^{19} \times \mathbb C_{\bf c}^ {31}$,
defined over $\mathbb{Q}$, and we write $\Delta_{x_i = 0}$ for the unique (up to sign)
irreducible polynomial in $\mathbb{Z}[{\bf k},{\bf c}]$ that vanishes on that hypersurface. We now define
$$ \Delta \,\,\,:= \,\,\, \Delta_{\rm int} \cdot {\rm lcm} 
\bigl(\,\Delta_{x_1 = 0}\,,\,\Delta_{x_2 = 0}\,,\,\ldots\,,\,\Delta_{x_{19} = 0}\,\bigr).$$
This product with a least common multiple (lcm) is the  {\em discriminant} for our problem.

\begin{ex}
The degree of $\Delta_{\rm int}$ as a polynomial
only in ${\bf c} = (c_1,c_2,c_3,c_4,c_5)$ equals $34$.
To illustrate this, we set ${\bf c} = \bigl(5,16 + C,\frac{8}{5}-C, \frac{6}{5}+C, 3-C \bigr)$
      where $C$ is a parameter, and
$$ {\bf k} \,=\,  \biggl(\frac{9}{5}, \frac{9}{5}, 3, \frac{2}{3}, \frac{2}{3}, 3, 1, 1, 100, 
\frac{4}{5}, 80, 100, 1, 3, \frac{2}{3}, \frac{2}{3}, 38, \frac{4}{5}, \frac{4}{5}, 4, 
\frac{1}{8}, \frac{3}{5}, 1, \frac{1}{2}, 19, \frac{7}{4}, \frac{7}{4},  1, 1, 5, 1\biggr).$$
 Under this specialization, the polynomial $\Delta_{\rm int}$ becomes 
 an irreducible polynomial of degree $34$ in the parameter $C$.
 Its coefficients are enormously large integers. It has $14$ real roots.

For the other factors $\Delta_{x_i = 0}$ of the discriminant, we find the following specializations:
\begin{equation}
\label{eq:otherdisc}
\begin{small}
 \begin{matrix}
x_1 \rightarrow 0,\,\,
x_2 \rightarrow 0,\,\,
x_3 \rightarrow 0,\,\,
x_4 \rightarrow (C{+}16)(5C{-}8),\,\,
x_5 \rightarrow C{+}16,\,\,
x_6 \rightarrow (C{+}16)(5C{+}6),\, \\
x_7 \rightarrow C{+}16,  
x_8 \rightarrow 5C-8,\,
x_9 \rightarrow 5C+6,\,
x_{10} \rightarrow \hbox{a quartic}\,q(C),\,
x_{11} \rightarrow 0,\,
x_{12} \rightarrow C{-}3,\,\\
x_{13} \rightarrow C{-}3,  \,\,
x_{14} \rightarrow (C{+}16)(5C{-}8), \,\,
x_{15} \rightarrow (C{+}16)(5C{+}6),\,\,
x_{16} \rightarrow (C{+}16)(5C{-}8), \\
x_{17} \rightarrow (C{+}16)(5C{+}6), \,\,\,
x_{18} \rightarrow (C{+}16)(5C{-}8)q(C),\,\,\,
x_{19} \rightarrow (C{+}16)(5C{+}6).
\end{matrix}
\end{small}
\end{equation}
These polynomials have $8$ distinct real roots in total, so
the total number of real roots 
of the discriminant  is $14+8 = 22$. These
are the break points where real root behavior changes:
$$ \begin{matrix}
     (9,0) & {\bf -77.2388} & (9,0) & {\bf -16.0000} & (9,0)  & -5.28669 &    (7,0) &
     -1.57472 \\  (9,0) &  {\bf -1.46506} & (9,0) & -1.34899  & (7,0) & -1.29581 &
      (9, 0) &    {\bf -1.20000} \\ (9,1)   & {\bf -1.19215} & (9,1) & -1.18389 & (7,1) &
       -0.584325  & (9,3) & -0.361808 \\   (7,3) & 0.191039  & (5,1) & 1.30812 & (7,1) &  1.33197 &
        (5,1)  &  {\bf 1.60000}  \\
   (5,0) & 1.60161 &(3,0) & {\bf 3.0000} & (3,0)  &  4.26306 & (5,0) & 11.1174 \\
   (7,0) & 21.4165 & (9,0) & {\bf 310.141} & (9,0) 
\end{matrix}
$$
In this table, we list all $22$ roots of the specialized discriminant $\Delta(C)$.
The eight boldface values of $C$ are the roots of (\ref{eq:otherdisc}):
here one of the coordinates of ${\bf x}$ becomes zero. At the other
$14$ values of $C$, the number of real roots changes. Between
any two roots we list the pair $(r,p)$, where $r$ is the number of real roots
and $p$ is the number of positive real roots. For instance, for
$ -0.361808 <C < 0.191039$, there are $7$ real roots of which $3$  are positive.
\end{ex}

\section{Algebraic Matroids and Parametrizations}
\label{sec5}
Question 5 asks: {\em Suppose we can measure only a subset of the 
species concentrations. Which subsets can lead to model rejection?}
This issue is important for the Wnt shuttle model
 because, in the laboratory, only some of the species are measurable by existing techniques. 

We shall address Question 5 using {\em algebraic matroids}.  
Matroid theory allows us to analyze the structure of relationships 
 among the $19$ species in Table \ref{tab-notation}. This first appeared in \cite{MRBH}.
  We here present an in-depth study of the matroids 
 that govern the Wnt shuttle model.
 
An introduction to (algebraic) matroids can be found in \cite{Ox}; they have been 
applied in \cite{KTT, KRT} to problems involving the completion of partial information.
General algorithms for computing algebraic matroids are derived in \cite{Ros}.
We briefly review  basic notions.

\begin{defn}
A {\em matroid} is an ordered pair $(X,\mathcal{I})$, where $X$ is a finite set,
here regarded as unknowns, and
$\mathcal{I}$ is a subset of the power set of $X$. These satisfy certain
{\em independence axioms}. For an {\em algebraic matroid}, we are given a prime ideal $P$
in the polynomial ring $K[X]$ generated by $X$, and $\mathcal{I}$ consists of
 subsets of $X$ whose images in $K[X]/P$ are  algebraically independent over $K$.
Thus, the collection of independent sets is
$\,\mathcal{I} \,= \,\bigl\{Y \subseteq X \,:\,P \cap K[Y] = \{0\} \bigr\}$.
\begin{enumerate}[1.]
\item {\em Bases} are maximal independent sets, i.e.~subsets in $\mathcal{I}$ that have maximal cardinality.
\item {\em Rank} is a function $\rho$ from the power set of $X$ to the natural numbers, which takes as input a set $Y \subset X$ and returns the cardinality of the largest subset of $Y$ in $\mathcal{I}$.
\item {\em Closure} is a function from the power set $2^X$ to itself. The input is a set $Y$ and the output is the largest set containing $Y$ with the same rank.
\item {\em Flats} are the elements  in $2^X$ that lie in the image of the closure map.
\item {\em Circuits} are the sets of minimal cardinality {\bf not} contained in $\mathcal{I}$.
\end{enumerate}
\end{defn}

We are here interested in the matroid that is defined by the prime ideal
$P = \widetilde{I_m}$ in $\mathbb{Q}({\bf k})[{\bf x]}$. Its ground set
$X$ is the set of species concentrations $\{x_1,\ldots,x_{19}\}$. Since
$V(\widetilde{I_m})$ is $5$-dimensional, each basis consists of
five elements in $X$. In our application, bases
are the maximal subsets of $X$ that can be specified independently at steady state; they 
 are also the minimal-cardinality sets that can be measured to learn all
 species concentrations.  The rank of a set $Y$ indicates the number of measurements required to learn the concentrations for every element of $Y$.  Flats are the full subsets that are specified 
  by any given collection of measurements.

Circuits furnish our answer to Question 5: they are minimal sets of species  that can be used to test compatibility of the data with the model. For each circuit $Y$ there is a unique-up-to-scalars
relation in $ \widetilde{I_m} \cap \mathbb{\bf Q}({\bf k})[Y]$, called the {\em circuit polynomial} of $Y$.
 If the measurements indicate that this relation is not satisfied, then the model and data are not compatible.

\begin{prop} \label{prop:951}
The algebraic matroid of $\widetilde{I_m}$ has rank $5$.
It has $ 951$ circuits, summarized in Table \ref{tab-circuits}.
Of the $11628$ subsets of $X$ of size $5$, precisely $2389$ are bases.
The $2092$ bases summarized in Table~\ref{tab-bases}  have base degree $1$, 
while the remaining $297$ have base degree $2$. 
\end{prop}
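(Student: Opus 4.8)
The plan is to reduce the proposition to one linear-algebra computation governing the matroid structure, and a second, heavier elimination computation for the base degrees. Throughout I work over the field $\mathbb{Q}(\mathbf{k})$, where $\widetilde{I_m}$ is prime and $\dim V(\widetilde{I_m}) = 5$ by Proposition~\ref{prop:3_2}. The rank statement is then immediate: the rank of an algebraic matroid equals the transcendence degree of the function field of $V(\widetilde{I_m})$ over $\mathbb{Q}(\mathbf{k})$, which is the dimension $5$.

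For the bases and circuits I would invoke the Jacobian criterion for algebraic matroids in characteristic zero (see \cite{Ros}): the algebraic matroid of an irreducible variety agrees with the linear matroid represented, on the coordinate ground set, by the tangent space at a generic smooth point. Concretely, I would form the Jacobian matrix $J = \bigl(\partial g_i/\partial x_j\bigr)$ of the listed generators $g_i$ of $\widetilde{I_m}$, evaluate it at a generic point $p \in V(\widetilde{I_m})$, and extract a $5 \times 19$ matrix $M$ whose rows span $\ker J(p) = T_p V(\widetilde{I_m})$. A subset $S \subseteq X$ is then independent exactly when the columns of $M$ indexed by $S$ are linearly independent. The bases are the size-$5$ subsets whose $5 \times 5$ minor of $M$ is nonzero; running over all $\binom{19}{5} = 11628$ subsets and tallying the nonvanishing minors should return $2389$. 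The circuits are the minimal dependent sets, i.e.\ the minimal supports of vectors in the $14$-dimensional kernel of $M$ acting on $\mathbb{Q}(\mathbf{k})^{19}$; enumerating these by the usual fundamental-circuit routine should yield the $951$ circuits recorded in Table~\ref{tab-circuits}.

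The base degrees are genuinely finer data, not determined by the matroid, so they require a separate computation. For a basis $B$, its base degree is the degree of the dominant projection $\pi_B \colon V(\widetilde{I_m}) \to K^5$, equivalently $[\,\mathbb{Q}(\mathbf{k})(\mathbf{x}) : \mathbb{Q}(\mathbf{k})(B)\,]$. I would compute it by assigning generic values to the five coordinates in $B$ and reading off the length of the resulting zero-dimensional ideal in the remaining $14$ variables from a Gr\"obner basis. Sorting the $2389$ bases by this length should separate them into $2092$ with base degree $1$ (those giving a birational, hence rational, parametrization of the variety) and $297$ with base degree $2$.

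The main obstacle is the cost of the base-degree step: carrying out symbolic elimination over the function field $\mathbb{Q}(\mathbf{k})$ for all $2389$ bases is prohibitive. The practical remedy is to specialize $\mathbf{k}$ to random rational values; since the generic matroid and the generic fiber sizes are preserved under such a specialization with probability $1$, one may recompute both the linear matroid of $M$ and each base degree over $\mathbb{Q}$, where Gr\"obner or numerical solving is fast. The only point requiring care is confirming that the chosen specialization is generic enough that no basis accidentally drops independence and no degree collapses; this holds off a proper subvariety of the $\mathbf{k}$-space, so a random choice suffices.
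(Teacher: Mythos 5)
Your proposal is correct and is essentially the paper's own approach: the paper proves this proposition purely computationally, deferring to the methods of \cite{Ros}, which are exactly what you outline — the characteristic-zero Jacobian criterion identifying the algebraic matroid with the column matroid of a matrix spanning the tangent space at a generic point, fiber-length (elimination) computations for the base degrees, and random specialization of $\mathbf{k}$ justified by genericity. One small point to make explicit: verifying the type data in Table~\ref{tab-circuits}, which counts the $\mathbf{k}$-parameters appearing in each circuit polynomial, still requires computing the circuit polynomials themselves by elimination over $\mathbb{Q}(\mathbf{k})$, not merely enumerating the circuits as minimal supports of kernel vectors.
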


The computation of this matroid was carried out using the
methods described in \cite{Ros}. It was first reported in
\cite{MRBH}, along with the matroids of alternative models for the Wnt pathway. 
The idea there was to find subsets of variables that were dependent for different models.

 Our matroid analysis here goes beyond \cite{MRBH} in several ways:
\begin{enumerate}[1.]
\item We keep track of the parameters ${\bf k}$. We take our
circuit polynomials to have (relatively prime) coefficients in $\mathbb{Z}[{\bf k}]$.
This gives us a new tool for model rejection, e.g.~in situations 
 where only one data point is known but some parameter values are available.
\item We show how circuits can be used in parameter estimation; 
this will be done in Section~\ref{sec8}.
\item We use the degree-1 bases to derive rational parametrizations of the 
variety $V(\widetilde{I_m})$.
\end{enumerate}

We now explain  Table \ref{tab-circuits}.
A circuit polynomial has {\em type} $(i,j)$ if it contains $i$ 
species concentrations (${\bf x}$-variables) and $j$ rate parameters (${\bf k}$-variables).
The entry in row $i$ and column $j$ in Table \ref{tab-circuits}
is the number of circuits of type $(i,j)$.
Zero values are omitted for clarity.

\begin{table}[h]
\[
\begin{array}{|c|cccccc}\hline & 2 & 3 & 4 & 5 &  6  \\ \hline
       2  & 5 & 1 &   &  &    \\
       3  &  & 6 &   &  &    \\
       4  &1 &5 & &  &   \\
       5  & &6 &1 &  &   \\
       6  & &7 &5 &  &   \\
       7  & &5 &3 &  &   \\
       8  & &1 &11 &1 &    \\
       9  & &6 &12 &3 &   \\
       10 & & &11 &1 &    \\
       11 & &4 &7  &11 & 1  \\ \hline
              \end{array}
\begin{array}{|c|cccccc} \hline
       & 2 & 3 & 4 & 5 &  6  \\ \hline
       12 & & &13 &10 &  \\
       13 & & &13 &15& 2  \\
       14 & & &19 &16& 1  \\
       15 & & &17 &21& 4  \\
       16 & & &15 &11& 2  \\
       17 & & &16 &32& 9  \\
       18 & & &4 & 6 & 2  \\
       19 & & &26 &36 &11 \\ 
       20 & & &44 &1 & 1  \\
       21 & & &26 &27& 9  \\ \hline
                     \end{array}
\begin{array}{|c|ccccc|} \hline
       & 2 & 3 & 4 & 5 &  6  \\ \hline
       22 & & &8  &58&   \\
       23 & & &4  &56& 5  \\
       24 & & &  &54 &14 \\
       25 & & &  &53 &15 \\
       26 & & & & 8  &16 \\
       27 & & &12 &56 &16 \\
       28 & & &2  &  &2  \\
       29 & & &  &29 &14 \\
       30 & & &  & &   \\
       31 & & & &  & 6  \\ \hline
       \end{array}
       \]
       \caption{
       The $951$ circuit polynomials, by
       numbers of unknowns $x_i$ and $k_j$.}
\label{tab-circuits}
\end{table}       

\begin{ex} 
\label{ex:circs}
There are five circuits of type $(2,2)$. One of them is
$\,  \dot{x}_1 =  -k_{1} x_{1} + k_{2} x_{2}$.
Most of the $951$ circuit polynomials  in $\widetilde{I_m}$
are more complicated.
In particular, they are non-linear in both ${\bf x}$ and ${\bf y}$.
For instance, the unique circuit polynomial of type $(6,11)$ equals
$$ \begin{matrix}
 (-k_{15}k_{17}k_{19}k_{20}k_{25}-k_{16}k_{17}k_{19}k_{20}k_{25})x_7x_9x_{13} \\ +(k_{14}k_{16}k_{18}k_{21}k_{24}+k_{14}k_{16}k_{19}k_{21}k_{24}+k_{14}k_{16}k_{18}k_{22}k_{24}+k_{14}k_{16} k_{19}k_{22}k_{24})x_3x_{12}x_{19}. 
 \end{matrix}
 $$
In Section \ref{sec7}, we will consider the role of these nonlinear functions in parameter estimation.
\end{ex}

Given a basis $Y$ of an algebraic matroid, its {\em base degree} 
is the length of the generic fiber of the projection of  
$V(P)$ onto the $Y$-coordinates (cf.~\cite{Ros}). Bases with degree $1$ are desirable:


\begin{prop} \label{param}
Let $P \subset K[X]$ be a prime ideal, $Y$ a basis of its algebraic matroid,
$|X| =n$, and $|Y|= r$. If $Y$ has base degree $1$ then $V(P)$ is a rational variety, and 
the basic circuits of $Y$ specify a
birational map $\varphi_Y : K^r \dashrightarrow K^n$ whose  image is Zariski dense in
$V(P)$
\end{prop}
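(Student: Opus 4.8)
The plan is to reinterpret base degree $1$ as the triviality of a field extension, and then to read an explicit rational inverse of the coordinate projection directly off the basic circuit polynomials. Write $L = \mathrm{Frac}(K[X]/P)$ for the function field of $V(P)$, and let $K(Y) \subseteq L$ be the subfield generated by the images of the basis variables $Y$. Since $Y$ is independent we have $P \cap K[Y] = \{0\}$, so $K[Y] \hookrightarrow K[X]/P$ and the coordinate projection $\pi_Y : V(P) \dashrightarrow K^r$ is dominant; since $Y$ is a maximal independent set, every coordinate is algebraic over $K(Y)$, so $L/K(Y)$ is finite. First I would verify that the base degree of $Y$, namely the length of the generic fiber of $\pi_Y$, equals $[L:K(Y)]$: the generic fiber is $\mathrm{Spec}$ of the localization $(K[X]/P)\otimes_{K[Y]} K(Y)$, which is a finite $K(Y)$-algebra and a localization of a domain, hence itself a field, necessarily $L$. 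Thus base degree $1$ is equivalent to $L = K(Y)$, which already exhibits $V(P)$ as a rational variety birational to $K^r$ via $\pi_Y$.

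Next I would extract the explicit parametrization from the circuits. For each $z \in X \setminus Y$ the set $Y \cup \{z\}$ is dependent, so it contains a unique circuit $C_z$, and $S := C_z \setminus \{z\} \subseteq Y$ is independent. Because $C_z$ has matroid rank $|C_z| - 1$, the elimination ideal $P \cap K[C_z]$ is a height-one prime in the UFD $K[C_z]$, hence principal; its irreducible generator is the basic circuit polynomial $g_z$, and $z$ genuinely appears in $g_z$ (otherwise $g_z \in P \cap K[S]$ would contradict independence of $S$). The crucial step is to show $g_z$ is affine-linear in $z$. Since $C_z$ is dependent, the image of $z$ in $L$ is algebraic over $K(S)$; since base degree $1$ gives $z \in L = K(Y) = K(S)(T)$ with $T = Y \setminus S$, and $K(S)$ is algebraically closed inside its own purely transcendental extension $K(S)(T)$, the element $z$ must already lie in $K(S)$. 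Hence $z$ has degree $1$ over $K(S)$: clearing denominators in $z = b(S)/a(S)$ yields an irreducible element $a(S)\,z - b(S) \in P \cap K[C_z]$ that is linear in $z$, and by uniqueness of the circuit polynomial up to scalar, $g_z$ is a scalar multiple of it.

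Finally I would assemble the map $\varphi_Y : K^r \dashrightarrow K^n$, sending each $Y$-coordinate to itself and each remaining coordinate $x_j = z$ to the rational function $b_j(S_j)/a_j(S_j)$ read off from $g_z$. By construction $\pi_Y \circ \varphi_Y$ is the identity wherever defined, so $\varphi_Y$ is a generically injective rational section of $\pi_Y$ with $r$-dimensional source and image contained in $V(P)$. As $\dim V(P) = r$ and $V(P)$ is irreducible, the image is Zariski dense and $\varphi_Y$ is the birational inverse of $\pi_Y$, as claimed.

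I expect the main obstacle to be the linearity step. Its entire content is the field-theoretic fact that a purely transcendental extension $K(S)(T)/K(S)$ leaves $K(S)$ algebraically closed inside it; this is exactly what upgrades ``$z$ is algebraic over $K(S)$ and lies in $K(Y)$'' to ``$z \in K(S)$,'' forcing each basic circuit polynomial to be affine-linear in its new variable and therefore solvable for it. Everything else --- principality of the circuit elimination ideals, the fiber-length computation, and the recognition of $\varphi_Y$ as a dominant section --- is formal.
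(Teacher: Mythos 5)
Your proof is correct, and its skeleton is the same as the paper's: for each coordinate $x_i \in X \setminus Y$ take the basic circuit of $(Y,x_i)$, show its circuit polynomial is linear in $x_i$, and solve for $x_i$ to obtain the coordinates of $\varphi_Y$. The difference lies in how the central linearity claim is justified. The paper's proof is four lines long: it asserts linearity directly from the geometric statement that the generic fiber of $V(P) \rightarrow K^r$ is a single point, leaving rationality of $V(P)$ and density of the image of $\varphi_Y$ as implicit consequences. You instead route everything through the function field: you first prove that base degree equals $[L:K(Y)]$ (identifying the generic fiber with $\mathrm{Spec}$ of $(K[X]/P)\otimes_{K[Y]}K(Y)$, a finite domain over $K(Y)$, hence a field), so base degree $1$ means $L = K(Y)$ and rationality is immediate; you then get linearity from the fact that $K(S)$ is algebraically closed in the purely transcendental extension $K(S)(T)$, which forces each non-basis coordinate into $K(S)$, where $S$ is its basic circuit minus that coordinate. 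Your version buys two things: a slightly sharper statement (the coefficients $p_i, q_i$ lie in $K[S]$, not merely $K[Y]$, consistent with the definition of a circuit polynomial), and an explicit filling-in of what the paper's ``therefore'' hides --- deducing linearity from the one-point fiber really does require an argument, e.g.\ that the projection of $V(P)$ dominates the circuit hypersurface so that a higher $x_i$-degree would produce extra points in general fibers; your purely-transcendental argument replaces that step cleanly. Two small points you should make explicit: irreducibility of $a(S)\,z - b(S)$ needs $a,b$ chosen coprime in $K[S]$, and the claim that the image of $\varphi_Y$ lies in $V(P)$ is not purely ``by construction'' --- it follows because the substitution homomorphism $K[X] \to K(Y)$ defined by $\varphi_Y$ is exactly the composite $K[X] \to K[X]/P \hookrightarrow L = K(Y)$, hence kills all of $P$, not just the circuit polynomials.
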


\begin{proof}
For each coordinate $x_i $ in $X \backslash Y$ there exists
a circuit containing $Y \cup \{x_i\}$; this is the {\em basic circuit} of $(Y,x_i)$.
Since $Y$ has base degree $1$, the generic fiber of the map $V(P) \rightarrow K^r$
consists of a unique point.
Therefore the circuit polynomial is linear in $x_i$. It has the form
\[ p_i(Y) \cdot x_i + q_i(Y) , \qquad \hbox{where} \,\, p_i,q_i \in K[Y].\]
The $i$-coordinate of the  rational map $\,\varphi_Y\,$ 
equals $\,x_i$ if $x_i \in Y\,$ and $\,-q_i(Y)/p_i(Y) $ if $x_i \notin Y$.
\end{proof}

From Propositions  \ref{prop:951} and \ref{param},
we obtain $2092$ rational parametrizations 
of the variety $V(\widetilde{I_m})$. These are the maps
$\varphi_Y : K^5 \dashrightarrow K^{19}$, 
where $Y$ runs over all bases of base degree $1$.
Using these $\varphi_Y$, we obtain $2092$
representations of the steady state variety 
(\ref{eq:chichi}) as a subset of $K^5$,
where now $K = \mathbb{Q}({\bf k},{\bf c})$.
Namely, we consider the preimages
of the five hyperplanes defined by (\ref{eq:conservation}).
These are hypersurfaces in $K^5$ whose intersection
represents the nine points in (\ref{eq:chichi}).
We performed the following computation for all $2092$ bases
$Y = \{y_1,\ldots,y_5\}$ of base degree~$1$:

\begin{enumerate}[1.]
\item Substitute ${\bf x} = \varphi_Y(y_1,\ldots,y_5)$ into the five linear equations (\ref{eq:conservation}).
\item Clear the denominators $d_1,\ldots,d_5$ in each equation to get polynomials $h_1,\ldots,h_5$ in $Y$.
\item The saturation ideal $J_Y = \langle h_1,\ldots,h_5 \rangle: \langle d_1 d_2 \cdots d_5\rangle^\infty$
represents the preimage of (\ref{eq:chichi}).
\end{enumerate}

Given such a wealth of parametrizations, we seek one where $J_Y$ has 
desirable properties. We use the following criterion:
consider subsets of five of the generators of $J_Y$,
compute the {\em mixed volume} of their Newton polytopes,
and fix a subset minimizing that mixed volume. 
In the census of $2092$ bases in  Table \ref{tab-bases},
that minimum is referred to as the mixed volume of $Y$.

\vspace{-7mm}
\begin{table}[h]
\[\begin{array}{|r|c|c|c|c|c|c|c|c|c|c|c|c|c|c|c|c|c|}\hline
\text{\em Mixed Volume} & 5& 9& 10& 11& 12& 13& 14& 15& 16& 20& 23& 24& 25& 30& 35& 42& 45 \\ \hline
\text{\em Frequency} & 2& 416& 6& 73& 50& 167& 563& 751& 10& 12& 6& 1& 11& 12& 4& 4& 4 \\ \hline
\end{array}\]
    \caption{Reducing the steady state equations to the  $2092$ bases of base degree $1$}
\label{tab-bases}
\end{table}
\vspace{-7mm}

By Bernstein's Theorem, the mixed volume is the number of solutions
to a generic system with the five given Newton polytopes.
We seek bases $Y$ where this matches the number nine from
Theorem~\ref{thm:nine}. We see that the mixed volume is nine
for $416$ of the bases in Table~\ref{tab-bases}.

\begin{ex}
 \label{ex:param}
The basis $Y = \{x_1, x_4, x_6, x_8, x_{13}\}$ has base degree $1$ and mixed volume $9$.
The remaining variables can be expressed in terms of $Y$ as follows. For brevity, we set
$$
\begin{matrix}
r(x_4,x_6) & = &
k_9k_{11}k_{20}k_{22}x_4x_6 + 
			k_9 k_{11} (k_{21}+k_{22}) (k_{23}+k_{31})x_4  \\ 
&+ &			 k_{20}k_{22}(k_{10}+k_{11})(k_{13}+k_{30})x_6 
			 +(k_{10}+k_{11})(k_{21}+k_{22})(k_{13}k_{23}+k_{23}k_{30}+k_{13}k_{31}).
\end{matrix}
$$


\[ \begin{array}{|ccr|ccr| } \hline
x_2 &=& \dfrac{k_1}{k_2} x_1 &       x_{12} &=& \frac{r(x_4,x_6)}{k_{12}k_{30}(k_{10}+k_{11})(k_{21}+k_{22})} \dfrac{k_{25}}{k_{24} }  x_{13}\\[4mm] 
x_3 &=& \dfrac{k_1 k_{26}}{k_2 k_{27}} x_1 &        x_{14} &=& \dfrac{k_1k_3}{k_2(k_4 + k_5)} x_1 x_4\\[4mm] 
x_5 &=& \dfrac{k_1k_3k_5(k_7+k_8)}{k_2k_6k_8(k_4+k_5)} \dfrac{x_1x_4}{x_8} &        x_{15} &=& \dfrac{k_1k_{14}k_{26}}{k_2k_{27} (k_{15} + k_{16})} x_1 x_6 \\[4mm] 
x_7 &=& \dfrac{k_1k_3k_5k_{28}(k_7+k_8)}{k_2k_6k_8k_{29}(k_4+k_5)} \dfrac{x_1x_4}{x_8}&       x_{16} &=& \dfrac{k_1k_3k_5}{k_2k_8(k_4 + k_5)} x_1 x_4\\[4mm] 
 x_9 &=&  \frac{k_6k_8k_{14}k_{16}k_{26}k_{29}(k_4+k_5)(k_{18}+k_{19})}
			{k_3k_4k_5k_{17}k_{19}k_{27}k_{28}(k_7+k_8)(k_{15}+k_{16})} 
			\dfrac{x_6x_8}{x_4} &        x_{17} &=& \dfrac{k_1k_{14}k_{16}k_{26}}{k_2k_{19}k_{27}(k_{15} + k_{16})} x_1x_6 \\[4mm] 
x_{10} &=& \frac{k_{12}(k_{10}+k_{11})(k_{20}k_{22}x_6 + (k_{21} + k_{22})(k_{23} + k_{31}))}{ r(x_4,x_6)}
			&        x_{18} &=& \frac{k_9k_{12}(k_{20}k_{22}x_6+(k_{21} + k_{22})(k_{23}+k_{31}))}{r(x_4,x_6)} x_4 \\[4mm] 
x_{11} &=&    \dfrac{k_{12}k_{30}(k_{10}+k_{11})(k_{21}+k_{22})}{r(x_4,x_6)} & x_{19} &=& \dfrac{k_{12}k_{20}k_{30}(k_{10}+k_{11})}{r(x_4,x_6)} x_6 \\[4mm]  \hline
\end{array} \]

\smallskip

This map $\varphi_Y$ is substituted into  (\ref{eq:conservation}), and then we
saturate. The resulting ideal  $J_Y$ equals
$$
\begin{matrix} \langle \alpha_1 x_6x_8 + \alpha_2 x_4 + \alpha_3 x_6,  & \alpha_4 x_1x_6 + \alpha_5 x_1 + \alpha_6 x_8 + \alpha_7,\\
\alpha_8 x_1x_4 + \alpha_9 x_8 + \alpha_{10},&
\alpha_{11} x_4x_6x_{13} + \alpha_{12} x_4x_{13} + \alpha_{13} x_6x_{13} + \alpha_{14} x_{13} + \alpha_{15}, \\
\alpha_{16} x_4x_6^2 + \alpha_{17} x_6^3 + \alpha_{18} x_4x_6 + & \hspace{-2mm} \alpha_{19} x_6^2 + \alpha_{20} x_8^2 
+ \alpha_{21} x_1 + \alpha_{22} x_4 + \alpha_{23} x_6 + \alpha_{24} x_8 + \alpha_{25} \rangle,
\end{matrix}
$$
where the $\alpha_1,\ldots,\alpha_{25}$
 are certain explicit rational functions in the ${\bf k}$-parameters.
\end{ex}

\section{Polyhedral Geometry}
\label{sec6}

Dynamics of the system while not at steady state cannot typically be studied with algebraic methods. One exception is the set of all possible states accessible from a given set of initial values via the chemical reactions in the model.  This set is called a {\em stoichiometric compatibility class} in the biochemistry literature.
Mathematically, these classes are convex polyhedra.
We determine them all for the Wnt shuttle model.
This resolves Problem 6 from the Introduction.

The conservation relations (\ref{eq:conservation})
 define a linear map $\chi$ from the orthant
of concentrations  $\RR_{\geq 0}^{19}$
  to the orthant of conserved quantities  $\RR_{\geq 0}^5$.
We express this projection as a $5 {\times} 19$-matrix:
\setcounter{MaxMatrixCols}{20}
\begin{equation}
\label{eq:five_nineteen}
\begin{pmatrix} c_1 \\ c_2 \\ c_3 \\ c_4 \\ c_5 \end{pmatrix} \,\, = \,\,
\begin{pmatrix}
 1 & 1 & 1 & \cdot & \cdot & \cdot & \cdot & \cdot & \cdot & \cdot & \cdot & \cdot & \cdot & 1 & 1 & \cdot & \cdot & \cdot & \cdot \\
  \cdot & \cdot & \cdot & 1 & 1 & 1 & 1 & \cdot & \cdot & \cdot & \cdot & \cdot & \cdot & 1 & 1 & 1 & 1 & 1 & 1 \\
 \cdot & \cdot & \cdot & \cdot & \cdot & \cdot & \cdot & 1 & \cdot & \cdot & \cdot & \cdot & \cdot & \cdot & \cdot & 1 & \cdot & \cdot & \cdot \\
 \cdot & \cdot & \cdot & \cdot & \cdot & \cdot & \cdot & \cdot & 1 & \cdot & \cdot & \cdot & \cdot & \cdot & \cdot & \cdot & 1 & \cdot & \cdot \\
 \cdot & \cdot & \cdot & \cdot & \cdot & \cdot & \cdot & \cdot & \cdot & \cdot & \cdot & 1 & 1 & \cdot & \cdot & \cdot & \cdot & \cdot & \cdot 
 \end{pmatrix}
 \cdot \begin{pmatrix} x_1 \\ x_2 \\ x_3 \\ \vdots \\ x_{18} \\ x_{19} \end{pmatrix}
\end{equation}

Let $P_{\bf c}$ denote the fiber of the map $\chi$ for ${\bf c} \in \RR_{\geq 0}^5$.
This is known in the biochemical literature as the
{\em invariant polyhedron} or the {\em stoichiometric compatibility class}
of the given~${\bf x}$; see e.g.~\cite[(3)]{ShSt}.
The fiber over the origin  is $\,P_{\bf 0} = \RR_{\geq 0} \{{\bf e}_{10} , {\bf e_{11}}\}$, the two-dimensional orthant formed by all positive linear combinations of ${\bf e}_{10}$ and ${\bf e}_{11}$. 
 If ${\bf c} \in \RR_{\geq 0}^5 $ is an interior point, then $P_{\bf c}$ is a $14$-dimensional 
 convex polyhedron of the form $P_{\bf 0} \times {\tilde P}_{\bf c}$
 where ${\tilde P}_{\bf c}$ is a $12$-dimensional (compact) polytope.
 Two vectors ${\bf c}$ and ${\bf c}'$ are considered {\em equivalent}
 if their invariant polyhedra $P_{\bf c}$ and $P_{{\bf c}'}$ have the same normal fan.
 This property is much stronger than being
   combinatorially isomorphic.
 The equivalence classes are relatively open polyhedral cones,
 and they define a partition of $\RR_{\geq 0}^5$. This partition
 is the {\em chamber complex} of the matrix (\ref{eq:five_nineteen}). 
 For a low-dimensional illustration, see
 \cite[Figure 1]{ShSt}. Informally speaking, the chamber complex 
 classifies the possible boundary behaviors of our dynamical system.
 
 \begin{prop} \label{prop:sixone}
 The chamber complex of our $5 {\times} 19$-matrix  divides $\RR_{\geq 0}^5$ into
 $19$ maximal cones. It is the product of a ray, $\RR_{\geq 0}$,
 and the cone over a subdivision of the tetrahedron.
 That subdivision consists of $18$ smaller tetrahedra
 and $1$ bipyramid, described in detail below.
  \end{prop}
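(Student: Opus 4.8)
\emph{Setup and reduction to a vector configuration.} The plan is to read the chamber complex directly off the columns of the matrix in (\ref{eq:five_nineteen}), since a chamber complex depends only on the underlying vector configuration. First I would list the columns as vectors in $\RR^5$: columns $1,2,3$ give $e_1$; columns $4,5,6,7,18,19$ give $e_2$; column $8$ gives $e_3$; column $9$ gives $e_4$; columns $12,13$ give $e_5$; columns $14,15$ give $e_1+e_2$; column $16$ gives $e_2+e_3$; column $17$ gives $e_2+e_4$; and columns $10,11$ are zero. Repeated columns and the two zero columns (which only account for the fixed recession directions ${\bf e}_{10},{\bf e}_{11}$ already visible in $P_{\bf 0}$) do not affect the walls, so the configuration reduces to the eight rays $e_1,\dots,e_5,e_1+e_2,e_2+e_3,e_2+e_4$.

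\emph{Splitting off the ray.} The coordinate $c_5$ is reached only by columns $12,13$, both equal to $e_5$ and contributing to no other coordinate. Hence in every fiber the constraint $x_{12}+x_{13}=c_5$ decouples, each $P_{\bf c}$ is a product of a segment with a polyhedron in the remaining variables, and its normal fan factors accordingly. I would conclude that the chamber complex of $\chi$ is $\RR_{\geq 0}\times\mathcal C$, where $\mathcal C$ is the chamber complex of the seven rays $e_1,e_2,e_3,e_4,e_1+e_2,e_2+e_3,e_2+e_4$ in $\RR^4_{\geq 0}$. This produces the ray factor in the statement.

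\emph{Reduction to a point configuration.} Slicing the four-dimensional cone $\mathcal C$ by an affine hyperplane turns the four rays $e_i$ into the vertices $v_1,v_2,v_3,v_4$ of a tetrahedron and the three rays $e_1+e_2,e_2+e_3,e_2+e_4$ into the midpoints of the three edges meeting the single vertex $v_2$. Thus $\mathcal C$ is the cone over the chamber complex (the common refinement of all triangulations) of these seven points, and it remains to subdivide the tetrahedron. The interior walls are exactly the interior facet-triangles spanned by triples of the seven points, which I would enumerate: the triangle on the three midpoints, cutting off the simplex at $v_2$ (one tetrahedron), together with six ``diagonal'' triangles inside the resulting frustum---three spanned by two vertices and one midpoint, and three by one vertex and two midpoints. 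Using the order-$3$ symmetry fixing $v_2$, I would locate the arrangement's interior vertices (two points where the two triples of diagonal triangles respectively concur, plus the three diagonal crossings of the lateral faces) and the triangle joining the latter three.

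\emph{The main obstacle: counting the cells.} The hard part will be the three-dimensional bookkeeping in the frustum: showing that the six diagonal triangles meet exactly as described and carve it into precisely $18$ cells. I would settle this with an Euler-characteristic count of the induced complex on the frustum---$V=11$ vertices, $E=36$ edges, and $F=44$ two-cells, whence $C=V-E+F-1=18$ three-cells---and then check that seventeen of these are tetrahedra while the eighteenth is the triangular bipyramid whose apexes are the two concurrence points and whose equatorial triangle joins the three lateral crossings. Restoring the simplex at $v_2$ then yields $18$ tetrahedra and $1$ bipyramid, i.e.\ the $19$ maximal cones. A direct computation in a polyhedral package (computing the chamber/secondary structure of the seven rays) would serve as an independent confirmation of the count.
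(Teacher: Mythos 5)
Your proposal is correct, and its skeleton coincides with the paper's proof: both split off the ray factor via the block structure of the matrix (the $1\times 2$ block $(1\ 1)$ in the $c_5$-row versus the $4\times 17$ block above it), both discard zero and repeated columns to reduce to the seven rays $e_1,e_2,e_3,e_4,e_1{+}e_2,e_2{+}e_3,e_2{+}e_4$, and both pass to the affine slice where these become a tetrahedron together with the midpoints of the three edges at one vertex. Where you diverge is in how the subdivision is established. The paper simply exhibits the answer: a table of the $19$ chambers (denoted $abcd$, $bcdl$, \dots, $hijkl$), each with its extreme rays and facet-defining inequalities, evidently obtained by computation, and then uses the Euler relation $12-39+47-19=1$ only as a consistency check. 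You instead propose to derive the subdivision by hand: enumerate the interior walls (the midpoint triangle plus the six diagonal triangles, three of each type -- and indeed exactly $7$ of the $\binom{7}{3}$ triples span planes meeting the interior), locate the five new vertices (your two concurrence points are the paper's $k$, the centroid where the three planes $x_1{=}x_2$, $x_3{=}x_2$, $x_4{=}x_2$ meet, and $l=(1,2,1,1)/5$ where the three planes $x_i{+}x_j{=}x_2$ meet; your three lateral diagonal crossings are the paper's $h,i,j$), and then run the Euler count in the deductive direction, obtaining $C$ from $V,E,F$. Your frustum counts $(V,E,F,C)=(11,36,44,18)$ are exactly what one gets by deleting the corner simplex from the paper's $(12,39,47,19)$, and your bipyramid with apexes $k,l$ and equator $hij$ is the paper's cell $hijkl$, so the two accounts agree in every detail. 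The trade-off: the paper's table carries more explicit information (the facet inequalities of each chamber), while your route is more self-contained and explains \emph{why} the extra vertices and the bipyramid appear; its cost is that the wall-intersection bookkeeping needed to certify $E=36$ and $F=44$ (and to see that the triangle $hij$ is \emph{not} itself a wall, since no spanned plane contains it, which is what keeps $hijkl$ a single cell) is the same hard core that the paper outsources to the computer, and your closing suggestion of a polyhedral-software check is in effect what the paper did.
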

  
  \begin{proof}
  The product structure arises because the matrix 
  has two blocks after permuting columns, an upper left $4 {\times} 17$ block and
  a lower right $1 {\times} 2$ block $( 1 \,\, 1 )$.
 Our task is to compute the chamber decomposition of
 $\RR_{\geq 0}^4$ defined by the $4 \times 17$-block.
 After deleting zero columns and multiple columns, 
 we are left with a $ 4 \times 7$-matrix, given by
  the seven left columns in
  $$ M \quad = \quad 
 \bordermatrix{  & a & b & c  & d & e & f & g \,\,\,& h & i & j & k & l \cr
  & 0 & 1 & 0 & 0 & 1 & 0 & 0 \,\,\, & 0 & 1 & 1 & 1 & 1 \cr 
   & 1 & 1 & 1 & 1 & 0 & 0 & 0 \,\,\, & 1 & 1 & 1 & 1 & 2 \cr
 & 0 &  0 & 1 & 0 & 0 & 1 & 0 \,\,\, & 1 & 0 & 1 & 1 & 1 \cr 
 & 0 & 0 & 0 & 1 & 0 & 0 & 1  \,\,\, & 1 & 1 & 0 & 1 & 1 \cr
  }.  $$
  The correspondence between the 
  seven left columns of $M$ and the columns of (\ref{eq:five_nineteen}) is as follows:
$$     \begin{matrix}
a =  \{x_4, x_5, x_6, x_{7}, x_{18}, x_{19}\}, \quad
b =   \{x_{14},x_{15}\}, \quad
c =  \{x_{16}\}, \\
d =  \{x_{17}\},   \quad
e =  \{x_1, x_2, x_3\} , \quad
f =  \{x_8\}, \quad
g =  \{x_{9}\}.
\end{matrix}
$$
The remaining columns of $M$ are additional vertices in the subdivision.

The following table lists the $19$ maximal chambers.
For each chamber we list the extreme rays and the
facet-defining inequalities. For instance, the 
chamber in $\RR_{\geq 0}^5$ denoted by
 $efjk$ is the orthant spanned by
the columns $e$, $f$, $j$ and $k$ of the matrix $M$
times the ray $(0,0,0,0,1)^T$. It is defined by
$c_5 \geq 0$ together with the four listed inequalities:
$c_4 \geq 0, \,{\rm min}(c_1,c_3) \geq c_2 \geq c_4$.
$$ \begin{matrix}
abcd &   \{c_4,c_3,c_1, c_2-c_4-c_3-c_1\} \\
bcdl  &   \{c_2-c_3-c_1, c_2-c_4-c_1, c_2-c_4-c_3, -c_2+c_4+c_3+c_1\}  \\
efgk & \{c_2, -c_2+c_4, -c_2+c_3, -c_2+c_1\} \\
bcjl &\{c_4, -c_2+c_1+c_3, c_2-c_3-c_4, c_2-c_1-c_4\} \\
bdil &  \{c_3, -c_2+c_4+c_1, c_2-c_4-c_3, c_2-c_3-c_1\} \\
beij &  \{c_3, c_4, c_1 -c_2, c_2-c_3-c_4\} \\
cdhl &  \{c_1, -c_2+c_3+c_4, c_2-c_4-c_3, c_2-c_4-c_1\} \\
cfhj & \{c_4, c_1, -c_2+c_3, c_2-c_4-c_1\}  \\
dghi &  \{c_1, c_3, -c_2+c_4, c_2-c_1-c_3\} \\
egik &  \{c_3, -c_2+c_4, c_2-c_3, -c_2+c_1\} \\
fghk &  \{c_1, -c_1+c_2, -c_2+c_3, -c_2+c_4\} \\
efjk & \{c_4, c_1- c_2, c_2-c_4, -c_2+c_3\} \\
bijl & \{c_2-c_1, -c_2+c_1+c_3, -c_2+c_4+c_1, c_2-c_3-c_4\} \\
chjl & \{c_2-c_3, -c_2+c_4+c_3, -c_2+c_3+c_1, c_2-c_4-c_1\} \\
dhil &  \{c_2-c_4, -c_2+c_4+c_1, -c_2+c_3+c_4, c_2-c_1-c_3\} \\
ghik &  \{c_4-c_2, c_2-c_3, c_2-c_1, -c_2+c_1+c_3\} \\
eijk & \{c_2-c_4, c_2-c_3, c_1-c_2, -c_2+c_3+c_4\} \\
fhjk & \{c_2-c_4, c_2-c_1, -c_2+c_3, -c_2+c_4+c_1\} \\
hijkl & \{c_2-c_4, c_2-c_3, c_2-c_1, -c_2+c_4+c_3, -c_2+c_4+c_1, -c_2+c_3+c_1\}
\end{matrix}
$$
Interpreting the columns of $M$ as homogeneous
coordinates, the table describes a subdivision of the
standard tetrahedron into $18$ tetrahedra
and one bipyramid $hijkl$.
These cells use the $12$ vertices $a,b,\ldots,l$.
The reader is invited to check that 
this subdivision has precisely $39$ edges and $47$ triangles, 
so the Euler characteristic is correct: $  12 - 39 + 47 - 19 = 1$.
\end{proof}
We shall prove the following result about the Wnt shuttle model.
\begin{prop} \label{prop:boundary}
Suppose that the rate constants $k_i$ and the conserved quantities
$c_j$ are all strictly positive. Then no steady states exist on the
boundary of the invariant polyhedron $P_{\bf c}$.
\end{prop}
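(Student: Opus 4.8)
The plan is to prove the equivalent statement that every steady state lying in the nonnegative orthant with $\mathbf{c}\in\RR_{>0}^5$ has \emph{all} coordinates strictly positive; since the relative boundary of $P_{\bf c}$ is precisely the locus where some $x_i$ vanishes, this gives the claim. The starting point is Lemma~\ref{lem:3_1}: the identity $I=I_m\cap I_e$ yields $V(I)=V(I_m)\cup V(I_e)$, and intersecting with a fixed positive real vector $\mathbf{k}$ distributes over this union, so the decomposition specializes. Hence any nonnegative steady state $\mathbf{x}$ satisfies either all generators of $I_e$ or all generators of $I_m$. I would first dispose of the extinction component: the displayed generators of $I_e$ contain $x_1,x_2,x_3,x_{14},x_{15}$, so on $V(I_e)$ the first conservation relation in (\ref{eq:conservation}) forces $c_1=x_1+x_2+x_3+x_{14}+x_{15}=0$, contradicting $c_1>0$. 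Thus $\mathbf{x}$ lies on $V(I_m)$, and every generator of $I_m$ is available.

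The core is a propagation-of-positivity (``domino'') cascade on $V(I_m)$, exploiting that most generators are binomials linking coordinates into groups that must vanish simultaneously: $k_1x_1-k_2x_2$ and $k_{26}x_2-k_{27}x_3$ tie $x_1,x_2,x_3$ together; $k_5x_{14}-k_8x_{16}$ ties $x_{14},x_{16}$; $k_{16}x_{15}-k_{19}x_{17}$ ties $x_{15},x_{17}$; $k_{28}x_5-k_{29}x_7$ ties $x_5,x_7$. Schematically I would argue: (i) $\{x_1,x_2,x_3\}$ is nonzero, for if $x_2=0$ then $k_3x_2x_4-k_4x_{14}-k_8x_{16}$ forces $x_{14}=x_{16}=0$, whence $c_1>0$ gives $x_{15}\ne 0$ and $x_{17}\ne 0$, contradicting $k_{14}x_3x_6-k_{15}x_{15}-k_{19}x_{17}$ at $x_3=0$; (ii) $x_8\ne 0$ from $c_3=x_8+x_{16}>0$ with $k_6x_5x_8-(k_7+k_8)x_{16}$, and likewise $x_9\ne 0$ from $c_4=x_9+x_{17}>0$ with $k_{17}x_7x_9-(k_{18}+k_{19})x_{17}$.

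Step (iii) is the heart. Given $x_2,x_3,x_8,x_9\ne 0$, the eight coordinates $\{x_4,x_5,x_6,x_7,x_{14},x_{15},x_{16},x_{17}\}$ are all forced to vanish or to survive together: the generators $k_3x_2x_4-k_4x_{14}-k_8x_{16}$, $k_6x_5x_8-(k_7+k_8)x_{16}$, $k_{14}x_3x_6-k_{15}x_{15}-k_{19}x_{17}$ link them, while the last and most complicated generator, after cancelling the positive factor $k_1$, reads $A\,x_6x_8=B\,x_4x_9$ with $A,B$ strictly positive integer combinations of the $k_i$, and hence couples $x_6$ to $x_4$. If this whole group vanished, then $k_9x_4x_{10}-(k_{10}+k_{11})x_{18}$ and $k_{20}x_6x_{11}-(k_{21}+k_{22})x_{19}$ force $x_{18}=x_{19}=0$, so $c_2$ collapses to $0$, contradicting $c_2>0$. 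Finally (iv) the \bcat group $\{x_{10},x_{11},x_{18},x_{19}\}$ cannot vanish, since with $x_4,x_6\neq 0$ the generators above together with $k_{30}x_{10}-(k_{23}+k_{31})x_{11}-k_{22}x_{19}$ tie the four together, and the synthesis relation $k_{13}x_{10}+k_{23}x_{11}+k_{11}x_{18}+k_{22}x_{19}=k_{12}>0$ forbids them all being zero; and (v) $\{x_{12},x_{13}\}$ is nonzero from $c_5=x_{12}+x_{13}>0$ with $k_{24}x_{11}x_{12}-k_{25}x_{13}$. Since $\mathbf{x}\ge 0$, each nonvanishing coordinate is positive, so $\mathbf{x}$ lies in the relative interior of $P_{\bf c}$.

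I expect the main obstacle to be step (iii): the genuinely nonlinear relation $A\,x_6x_8=B\,x_4x_9$ is what transfers nonvanishing between the cytoplasmic destruction-complex coordinates and the nuclear phosphatase coordinates, and one must order the implications so that $x_8\ne 0$ and $x_9\ne 0$ are in hand before this relation can be used. Threading the five conservation laws through the cascade in the correct order is the delicate bookkeeping. A secondary point to verify is that the positivity input $x_6>0$ used in (iv) (to conclude that the bracket $(k_{23}+k_{31})+\tfrac{k_{20}k_{22}}{k_{21}+k_{22}}x_6$ is nonzero) genuinely requires $\mathbf{x}\ge 0$ and not merely $x_6\ne 0$, which is exactly where the nonnegativity hypothesis of the invariant polyhedron enters.
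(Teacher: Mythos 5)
Your proof is correct, and it follows the paper's skeleton in its first two moves---the decomposition $V(I)=V(I_m)\cup V(I_e)$ from Lemma~\ref{lem:3_1}, and the disposal of the extinction component via the first conservation law---but your handling of the main component takes a genuinely different route. (On the extinction step, your version is in fact the cleaner one: since $I_e$ contains $x_1,x_2,x_3,x_{14},x_{15}$, the first relation in (\ref{eq:conservation}) forces $c_1=0$; the paper's printed proof cites the fourth row of (\ref{eq:five_nineteen}) and concludes $c_4=0$, which is an indexing slip.) For the main component, the paper appeals to the degree-one basis $Y=\{x_1,x_4,x_6,x_8,x_{13}\}$ and its rational parametrization $\varphi_Y$ from Example~\ref{ex:param}: because $r(x_4,x_6)>0$ whenever ${\bf k}>0$ and ${\bf x}\geq 0$, every coordinate of $\varphi_Y$ is nonzero once the five basis coordinates are, so only five vanishing cases need to be propagated to a contradiction with ${\bf c}>0$. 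You never invoke the parametrization or the matroid machinery of Section~\ref{sec5}; instead you run the vanish-or-survive cascade through all $19$ coordinates directly from the displayed generators of $I_m$, using nonnegativity and all five positive conservation laws. I checked your steps (i)--(v) against those generators and they all hold; in particular the all-or-nothing claim in step (iii) for $\{x_4,x_5,x_6,x_7,x_{14},x_{15},x_{16},x_{17}\}$ does go through once $x_2,x_3,x_8,x_9\neq 0$ are secured, with the one non-binomial generator supplying $A\,x_6x_8=B\,x_4x_9$ (with $A,B>0$ after cancelling $k_1$) to couple $x_4$ and $x_6$. The trade-off: your argument is more elementary and self-contained, needing only Lemma~\ref{lem:3_1} and the explicit generators, at the cost of heavier case bookkeeping; the paper's argument is shorter because $\varphi_Y$ was already computed for Section~\ref{sec5}, and it showcases what degree-one bases are good for. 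Both proofs ultimately rest on the same two mechanisms: binomial-type generators propagate vanishing among blocks of coordinates, and the strictly positive conserved quantities, together with the synthesis relation $k_{13}x_{10}+k_{23}x_{11}+k_{11}x_{18}+k_{22}x_{19}=k_{12}$, rule out an entire block vanishing.
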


\begin{proof}
Consider the two components $I_m$ and $I_e$ of the steady state ideal $I$
given in Lemma  \ref{lem:3_1}. We intersect each of the two varieties
with  the affine-linear space defined by the conservation relations (\ref{eq:conservation})
for some ${\bf c} \in \mathbb{R}_{>0}^5$.
We claim that all solutions ${\bf x}$ satisfy
 $x_i \not= 0$ for $i=1,2,\ldots,19$. 

For the main component $V(I_m)$, we prove this assertion with the help of the parametrization $\varphi_Y$ from
Example \ref{ex:param}. If the values of $x_1,x_4,x_6,x_8,x_{13}$ and of the expression $r(x_4,x_6)$ are nonzero, then each coordinate of $\varphi_Y$ is nonzero. 
We next observe that $r(x_4,x_6) > 0$ for any ${\bf k} > 0$ and ${\bf x} \geq 0$. 
A case analysis, using binomial relations in the ideal $I_m$, reveals that if any of $x_1,x_4,x_6,x_8,x_{13}$ are zero, some coordinate of ${\bf c}$ is forced to zero as well:
\[
\begin{array}{rcrcr}
x_1 = 0 & \Rightarrow & x_2,x_3,x_{14},x_{15} = 0 &  \Rightarrow & c_1 = 0, \\
x_{13} = 0 & \Rightarrow & x_{12} = 0 &  \Rightarrow & c_5 = 0, \\
x_4 = 0 & \Rightarrow & x_5,x_6, x_7, x_{14},x_{15},x_{16},x_{17},x_{18},x_{19} = 0 &  \Rightarrow & c_2 = 0, \\
 & \text{or} & x_8,x_{16} = 0 &  \Rightarrow & c_3 = 0, \\
x_6 = 0 & \Rightarrow & x_9, x_{17} = 0 &  \Rightarrow & c_4 = 0, \\
 & \text{or} & x_4 = 0 &  \Rightarrow & c_2 \text{ or } c_3 = 0, \\
x_8 = 0 & \Rightarrow & x_{16} = 0 &  \Rightarrow & c_3 = 0. \\
\end{array}
\]

It remains to consider the extinction component.
Its ideal $I_e$ contains the set 
$\,b \cup l = \{x_1, x_2, x_3, x_{14}, x_{15}\}$. The corresponding columns of the 
matrix in  (\ref{eq:five_nineteen}) are the only columns
with a nonzero entry in the fourth row. This implies that
$c_4 = 0$ holds for every steady state in $V(I_e)$.
We conclude that there are no steady states on the boundary of
the polyhedron~$P_{\bf c}$.~\end{proof}

\begin{remark}
In this proof we did not  need
the detailed description of the chamber complex,
because of the special combinatorial structure in the Wnt shuttle model.
In general, when studying chemical reaction networks that arise in systems biology,
an analysis like Proposition \ref{prop:sixone} is requisite for gaining
information about possible zero coordinates in
the steady states.
\end{remark}

\section{Parameter Estimation}
\label{sec7}

Question 7 asks: {\em What information does species concentration data give us for parameter estimation?}  This question is of particular importance to experimentalists, as species concentrations depend on initial conditions, whereas parameter values are intrinsic to the biological process being modeled.
Identifiability of parameters has been studied in many contexts, notably in
statistics \cite{GSS} and in biological modeling \cite{MS}.
  Sometimes, as in \cite{MS}, parameters are determined from complete time-course data of the dynamical system, making a differential algebra approach desirable. 
  In the present paper we focus on the  steady state variety, so we consider data collection only at steady state.
  We assume that there is a true
  but unknown parameter vector ${\bf k}^* \in \mathbb{R}^{31}$
  of rate constants, and our data  are sampled from the
  positive real points ${\bf x}$ on the variety in $\mathbb{R}^{19}$ that is defined by the
  $19$  polynomials in (\ref{eq:diffeqn}).
  
\subsection{Complete Species Information.} The first algebraic question we answer: 
To what extent is the true parameter vector ${\bf k}^*$ determined by
points on its steady state variety?

To address this question, we form the  polynomial matrix
$F({\bf x})$ of format $19 \times 31$ whose entries are the 
coefficients of the right-hand sides of (\ref{eq:diffeqn}),
regarded as linear forms in ${\bf k}$.
With this notation, our dynamical system (\ref{eq:diffeqn})
can be written in matrix-vector product form as 
$$ \dot{\bf x} \,=\, F({\bf x}) \cdot {\bf k}.$$
Our data points are sampled from 
\begin{equation}
\label{eq:PSSV}
 \bigl\{\, {\bf x} \in {\mathbb R}^{19}_{> 0} \,\,:\,\,
F({\bf x}) \cdot {\bf k}^* = {\bf 0} \,\bigr\}.
\end{equation}
Let ${\bf x}_1, {\bf x}_2 , {\bf x}_3,\ldots $ denote generic data points in (\ref{eq:PSSV}).
The set of all parameter vectors ${\bf k}$ that are compatible with these data
is a linear subspace of ${\mathbb R}^{31}$, namely it is the intersection 
\begin{equation}
\label{eq:furthermeasure}
{\rm kernel}(F({\bf x}_1)) \,\cap \, {\rm kernel}(F({\bf x}_2)) \,\cap \, {\rm kernel}(F({\bf x}_3)) \,\cap \,\cdots
\end{equation}
The best we can hope to recover from sampling data
is the following  subspace containing ${\bf k}^*$:
\begin{equation}
\label{eq:besthope}
 \bigcap_{{\bf x} \,{\rm in} \, (\ref{eq:PSSV})} \! {\rm kernel}(F({\bf x})) \,\,\, \subset \,\,\, \mathbb{R}^{31}. 
 \end{equation}
We refer to (\ref{eq:besthope}) as the  {\em space of parameters compatible with ${\bf k}^*$}.
A  direct computation reveals:

\begin{prop} 
\label{prop:nonident}
The space of all parameters compatible with ${\bf k}^*$ is a $14$-dimensional subspace
of $\mathbb{R}^{31}$. If ${\bf x}$ is generic then 
the kernel of $F({\bf x})$ is a $17$-dimensional subspace of 
$\mathbb{R}^{31}$. 
\end{prop}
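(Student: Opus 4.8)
The plan is to read both statements off the $19\times 31$ matrix $F({\bf x})$, using that the five conservation laws force linear relations among its rows.

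For the second assertion I would start from the five polynomial identities recorded at the beginning of Section~\ref{sec3}, namely $\dot{x}_1+\dot{x}_2+\dot{x}_3+\dot{x}_{14}+\dot{x}_{15}=\dot{x}_8+\dot{x}_{16}=\dot{x}_9+\dot{x}_{17}=\dot{x}_{12}+\dot{x}_{13}=0$ together with the long relation among $\dot{x}_4,\ldots,\dot{x}_{19}$. Since the entries of $F({\bf x})$ are precisely the coefficients of the forms $\dot{x}_i$ regarded as linear in ${\bf k}$, each of these five identities is a linear dependence among the rows of $F({\bf x})$ that holds for every ${\bf x}$. Hence ${\rm rank}\,F({\bf x})\le 14$ identically, so $\dim {\rm kernel}(F({\bf x}))\ge 17$. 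To get equality at a generic point it suffices to exhibit one $14\times 14$ minor of $F({\bf x})$ that does not vanish identically, or equivalently to evaluate $F$ at a single random point ${\bf x}=\varphi_Y(y)$ of $V(\widetilde{I_m})$ (using Example~\ref{ex:param}) and check in {\tt Macaulay2} that its rank is $14$. This yields $\dim {\rm kernel}(F({\bf x}))=31-14=17$.

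For the first assertion I would compute the total row span $W={\rm span}\{\,F({\bf x})^T e_i : {\bf x}\in V(\widetilde{I_m}),\ 1\le i\le 19\,\}\subset\mathbb{R}^{31}$, because the space~(\ref{eq:besthope}) of parameters compatible with ${\bf k}^*$ is exactly $W^\perp$: intersecting the kernels $(\,{\rm rowspace}\,F({\bf x})\,)^\perp$ over all ${\bf x}$ gives the orthogonal complement of the sum of the row spaces. The entries of each vector $F({\bf x})^T e_i$ are rational functions of ${\bf x}$, and by the birational parametrization $\varphi_Y$ the point ${\bf x}$ is itself a rational function of the five free coordinates $(x_1,x_4,x_6,x_8,x_{13})$ and of the fixed generic ${\bf k}^*$. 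By the standard fact that the span of a family of vectors depending algebraically on a parameter ranging over an irreducible variety is already attained at finitely many generic parameter values, it is enough to stack the matrices $F(\varphi_Y(y^{(1)})),\ldots,F(\varphi_Y(y^{(m)}))$ for several generic $y^{(j)}$ and compute the rank of the resulting tall matrix. I would verify that this rank equals $17$ and no longer grows as further points are adjoined, so that $\dim W=17$ and $\dim W^\perp=31-17=14$.

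The main obstacle is the first part: one must be confident that finitely many sampled points recover the entire row span over the variety and that the stable value is exactly $17$ rather than something smaller. The genericity argument above settles the sampling question, since the rank of the stacked matrix is lower semicontinuous and attains its maximum on a dense set, and the precise value is confirmed by observing that the stacked rank stabilizes at $17$. A minor point to verify is that restricting ${\bf x}$ to the positive real locus~(\ref{eq:PSSV}) rather than the full complex component does not shrink $W$: for generic positive real ${\bf k}^*$ the positive real points are Zariski dense in $V(\widetilde{I_m})$, so the real and complex computations of $W$ agree.
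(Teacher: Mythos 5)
Your proposal is correct, and it is in essence a rigorous unpacking of what the paper does: the paper offers no argument at all beyond the phrase ``a direct computation reveals,'' so any valid proof here is a matter of organizing that computation, which you do well. What you add beyond the paper: (i) for the second claim, you observe that the five conservation identities recorded at the start of Section~\ref{sec3} are linear dependencies among the rows of $F({\bf x})$ holding identically in ${\bf x}$, so ${\rm rank}\,F({\bf x})\le 14$ a priori and a single generic-point evaluation suffices to pin the kernel at dimension $17$; (ii) for the first claim, you correctly identify the space (\ref{eq:besthope}) as $W^\perp$ for $W$ the total row span over the variety, and your stabilization certificate is sound --- if the stacked rank at $m$ generic points is below $\dim W$, a generic additional point strictly increases it, so stabilization at generic samples does certify $\dim W$. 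The one thin spot is the final density claim: concluding that the intersection over the \emph{positive real} locus (\ref{eq:PSSV}) agrees with the intersection over the complex main component requires that (\ref{eq:PSSV}) is nonempty and contains a smooth point of $V(\widetilde{I_m})$ specialized at ${\bf k}^*$ (the extinction component carries no positive points, so a smooth positive point forces Zariski density by a dimension count). The paper's Theorem~\ref{thm:posrealsols} and sampling experiments support nonemptiness for positive parameters but do not prove it for all generic ${\bf k}^*$; you flag this issue but assert rather than prove it. Since the paper itself silently passes over the same point, this does not put you below the paper's own standard of rigor, but it is the step you should be prepared to defend.
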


This has the following noteworthy consequence for our  biological application:

\begin{cor}
The parameters of the Wnt shuttle model are not identifiable from steady state data,
but there are $14$ degrees of freedom in recovering
the true parameter vector ${\bf k}^*$.
\end{cor}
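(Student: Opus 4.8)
The plan is to derive the corollary directly from Proposition \ref{prop:nonident}, once I make precise what it means for ${\bf k}^*$ to be identifiable from steady-state data. First I would fix the definition: the rate vector is \emph{identifiable} from the data sampled in (\ref{eq:PSSV}) if ${\bf k}^*$ is the only vector (up to the unavoidable scaling ambiguity coming from the homogeneity of $F({\bf x})\cdot{\bf k}={\bf 0}$) consistent with every observed point. Equivalently, identifiability holds precisely when the best-possible recovery space (\ref{eq:besthope}) collapses to the line $\mathbb{R}\,{\bf k}^*$. So the whole question reduces to computing the dimension of the intersection of kernels in (\ref{eq:besthope}), which is exactly the content of Proposition \ref{prop:nonident}.

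The key intermediate step is the consistency observation that makes (\ref{eq:besthope}) the genuine object of interest. By construction, any vector ${\bf k}$ lying in (\ref{eq:besthope}) satisfies $F({\bf x})\cdot{\bf k}={\bf 0}$ for \emph{every} point ${\bf x}$ of the positive steady-state locus (\ref{eq:PSSV}). Hence no collection of steady-state measurements, however large, can separate ${\bf k}^*$ from another element of this subspace: each sample point ${\bf x}_i$ imposes the linear constraint $\mathrm{kernel}(F({\bf x}_i))$, and the accumulated constraints (\ref{eq:furthermeasure}) can never cut below the limiting intersection (\ref{eq:besthope}). Thus the finest description of ${\bf k}^*$ obtainable from complete steady-state data is the entire subspace (\ref{eq:besthope}), not the single vector.

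The final step simply invokes Proposition \ref{prop:nonident}: since that subspace has dimension $14 > 1$ inside $\mathbb{R}^{31}$, the vector ${\bf k}^*$ is not pinned down, which is exactly non-identifiability, and the $14$ independent directions along which ${\bf k}^*$ remains undetermined are the promised $14$ degrees of freedom. I do not expect a genuine obstacle, since the substantive computation — that the kernel of a single generic $F({\bf x})$ already has dimension $17$, and that intersecting over all admissible ${\bf x}$ stabilizes at dimension $14$ — is precisely what Proposition \ref{prop:nonident} asserts and is assumed here. The only point meriting care is the honest translation between the algebraic conclusion (a $14$-dimensional kernel intersection) and the biological claim (non-identifiability with a $14$-parameter family of indistinguishable models), together with the remark that this is still informative: the data recover $31-14=17$ independent linear relations among the rate constants, so identifiability fails only along a controlled, explicitly describable subspace.
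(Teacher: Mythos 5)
Your proposal is correct and follows exactly the paper's route: the corollary is stated as an immediate consequence of Proposition \ref{prop:nonident}, since the $14$-dimensional space (\ref{eq:besthope}) of compatible parameters is the best any amount of steady-state data can pin down, and $14>1$ rules out identifiability. Your added care in defining identifiability (up to the line $\mathbb{R}\,{\bf k}^*$ forced by homogeneity of $F({\bf x})\cdot{\bf k}={\bf 0}$) is a reasonable sharpening of what the paper leaves implicit, but the substance is the same.
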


Our next step is to gain a more precise understanding of
the subspaces in Proposition \ref{prop:nonident}.
To do this, we shall return to the combinatorial setting of matroid theory.
We introduce two matroids on the $31$ reactions in Table \ref{tab-reactions}.
The common ground set  is $K = \{k_1,k_2,\ldots,k_{31}\}$.
The  {\em one-point matroid}  $\mathcal{M}_{\rm one}$ is the rank $17$ matroid on
$K$ defined by the linear subspace ${\rm kernel}(F({\bf x}))$  of $\mathbb{R}^{31}$
where ${\bf x} \in \mathbb{R}^{19}$ is generic.
The {\em parameter matroid}  $\mathcal{M}_{\rm par}$ is the rank $14$ matroid
on $K$ defined by the space (\ref{eq:besthope})
of all parameters compatible with a generic~${\bf k}^*$.
The following result,  obtained by calculations,
reflects the block structure of the matrix~$F({\bf x})$.

\begin{figure}[h]
\includegraphics[scale=.8]{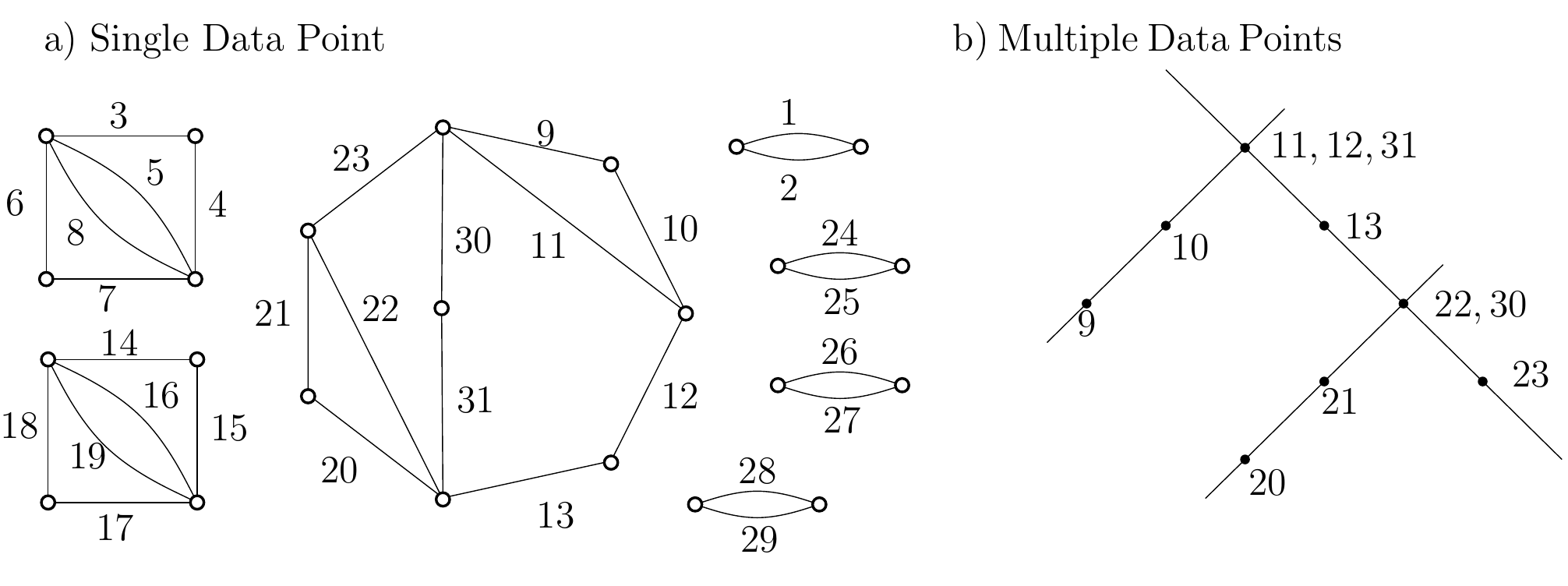}
\caption{Graphic representation of the one-point matroid  $\mathcal{M}_{\rm one}$ of rank $17$.
The rank $4$  component  of
the rank $14$ parameter matroid $\mathcal{M}_{\rm par}$ is not graphic.}
\label{fig:kmatroid}
\end{figure}

\begin{prop} 
\label{prop:kmatroid}
The one-point matroid  $\mathcal{M}_{\rm one}$ is the graphic matroid 
of the graph shown in Figure \ref{fig:kmatroid} a).
Its seven connected components are matroids of ranks $3,3,7,1,1,1,1$.
The rank $14$ parameter matroid $\mathcal{M}_{\rm par}$ is obtained
from $\mathcal{M}_{\rm one}$ by specializing the rank $7$ component
to the rank $4$ matroid on $11$ elements whose affine representation is shown in
Figure \ref{fig:kmatroid} b).
\end{prop}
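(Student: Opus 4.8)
The plan is to strip the problem down to the stoichiometric matrix and then argue component by component. The first observation is that $\mathcal{M}_{\rm one}$ does not really depend on the chosen generic point: writing the $j$-th column of $F(\mathbf{x})$ as $\mathrm{mon}_j(\mathbf{x})\cdot\mathbf{s}_j$, where $\mathbf{s}_j$ is the net stoichiometric vector of the $j$-th reaction and $\mathrm{mon}_j$ its reactant monomial, one sees that for generic $\mathbf{x}$ all monomials are nonzero, so scaling columns (which preserves the matroid) identifies the column matroid of $F(\mathbf{x})$ with the column matroid $M[S]$ of the $19\times 31$ stoichiometric matrix $S=(\mathbf{s}_1,\dots,\mathbf{s}_{31})$. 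Since $\mathcal{M}_{\rm one}$ is the matroid represented by $\ker F(\mathbf{x})$, the orthogonal complement of the row space of $F(\mathbf{x})$, it is the dual $M[S]^{*}$; in particular it is intrinsic, which explains the name. Its rank is $31-\operatorname{rank}(S)=31-14=17$, matching Proposition \ref{prop:nonident}.

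Next I would determine the direct-sum decomposition of $M[S]$ into connected components, which by self-duality of matroid connectivity is also the decomposition of $\mathcal{M}_{\rm one}$. A clean source of small components is a species $x_i$ whose coordinate appears in only a few columns of $S$: for instance $x_1$ is private to the reversible pair $\{k_1,k_2\}$, and $x_{12},x_{13}$ to $\{k_{24},k_{25}\}$. In each such case the private coordinate cannot be balanced except inside the pair, so that pair is a separator, hence a rank-$1$ component of $\mathcal{M}_{\rm one}$. Running the analogous bookkeeping over all species, most efficiently by the algorithm of \cite{Ros} applied to $S$, produces the seven components and their ranks $3,3,7,1,1,1,1$ (which sum to $17$). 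For each of the six components of rank at most $3$ I would exhibit the small graph realizing it, and for the rank-$7$ component, which lives on $11$ elements, I would match its circuits against the corresponding block of the graph in Figure \ref{fig:kmatroid} a). Taking the direct sum then shows $\mathcal{M}_{\rm one}=M(G)$, and in particular that it is graphic.

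For the parameter matroid I would work with the subspace $V=\bigcap_{\mathbf{x}}\ker F(\mathbf{x})$ of (\ref{eq:besthope}), of dimension $14$ by Proposition \ref{prop:nonident}, whose matroid is $\mathcal{M}_{\rm par}$. The crucial point is that the extra relations cutting $V$ out of a single kernel come from algebraic dependencies among the monomials $\mathrm{mon}_j(\mathbf{x})$ as $\mathbf{x}$ ranges over the five-dimensional steady-state variety. On a small block this adds nothing: for the pair $\{k_1,k_2\}$, the equation $\dot{x}_1=0$ pins the ratio $x_1/x_2$ to the constant $k_2^{*}/k_1^{*}$, so the single relation already present at a generic point persists and the block keeps rank $1$. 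I would verify this rigidity for all six small components, so that $V$ respects the block decomposition and only the rank-$7$ block can contract; since the total rank falls from $17$ to $14$, all three lost dimensions lie there and the block becomes a rank-$4$ matroid on the same $11$ elements, i.e.\ a weak-map specialization obtained by degenerating its $11$ representing vectors from rank $7$ into rank $4$. I would compute this representation (symbolically, or by sampling enough steady states to stabilize the intersection), match it to the affine configuration of Figure \ref{fig:kmatroid} b), and certify that it is \emph{not} graphic by exhibiting a $U_{2,4}$ minor, visible as four points that become distinct and collinear after contracting two elements of the configuration.

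The main obstacle is the third paragraph: proving that the global intersection over the entire steady-state variety localizes \emph{exactly} to the rank-$7$ block, rather than also eroding the small blocks, and pinning down the resulting rank-$4$ matroid precisely. This forces one to control which monomial ratios are constant on the variety, equivalently which binomials lie in the steady-state ideal, and is the step where the explicit structure of $I_m$ from Proposition \ref{prop:3_2} must replace generic reasoning. By comparison, certifying non-graphicness through the excluded minor is routine once the affine representation is in hand.
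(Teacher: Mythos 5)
Your plan is correct, and it is in fact considerably more structured than what the paper itself provides: the paper gives no proof at all beyond the remark that the proposition is ``obtained by calculations'' reflecting the block structure of $F(\mathbf{x})$, with the discussion following the proposition (rank $14 = 31 - \operatorname{rank}(\mathcal{M}_{\rm one})$; block decomposition into two $3\times 6$ blocks, one $4\times 11$ block, and four $1\times 2$ blocks) serving as a summary of that machine computation. Your route supplies the missing scaffolding. The reduction of $\mathcal{M}_{\rm one}$ to the dual of the column matroid of the stoichiometric matrix via column scaling is exactly right for mass-action kinetics and explains why the generic point $\mathbf{x}$ is irrelevant; the private-species argument correctly produces the rank-one separators (e.g.\ $x_1$ occurs only in the columns of $k_1,k_2$, so $\{k_1,k_2\}$ is a separator), and the same bookkeeping isolates the two $3\times 6$ blocks and the $4\times 11$ block. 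For $\mathcal{M}_{\rm par}$, the obstacle you flag in your last paragraph is partly illusory: since each $\ker F(\mathbf{x})$ is the direct sum of its intersections with the fixed coordinate subspaces indexed by the blocks, the intersection (\ref{eq:besthope}) automatically equals the direct sum of the blockwise intersections, so the decomposition can never be destroyed; what genuinely requires proof is only that the six small blocks are \emph{rigid}, i.e.\ their kernels do not vary as $\mathbf{x}$ moves along the steady-state variety. This is immediate from the binomial-type generators of $I_m$ displayed in Section \ref{sec3} ($k_1x_1-k_2x_2$, $k_{26}x_2-k_{27}x_3$, $k_{28}x_5-k_{29}x_7$, $k_{24}x_{11}x_{12}-k_{25}x_{13}$, $k_5x_{14}-k_8x_{16}$, $k_{16}x_{15}-k_{19}x_{17}$, together with $k_3x_2x_4-k_4x_{14}-k_8x_{16}$, $k_6x_5x_8-(k_7+k_8)x_{16}$ and their nuclear analogues), which force all monomials appearing within a small block to have constant ratios on $V(\widetilde{I_m})$; your rank count $17\to 14$ then pins the big block at rank $4$, exactly matching the paper's statement that only the $4\times 11$ block grows to a $7\times 11$ matrix under repeated measurements. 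Two caveats. First, identifying the rank-$7$ component with the graph of Figure \ref{fig:kmatroid}(a) and the rank-$4$ matroid with the configuration of Figure \ref{fig:kmatroid}(b) still requires the same explicit linear algebra the paper performs, so your argument organizes the computation rather than eliminating it. Second, a $U_{2,4}$ minor does certify non-graphicness (graphic implies binary), but its absence would not certify graphicness, since regular non-graphic matroids exist; the clean witness here is the paper's observation that $\{k_{22},k_{23},k_{30},k_{31}\}$ drops from a circuit of rank $3$ in $\mathcal{M}_{\rm one}$ to a rank-$2$ set of four distinct points in $\mathcal{M}_{\rm par}$, giving a $U_{2,4}$ restriction outright --- though note that non-graphicness is asserted only in the figure caption and is not part of the proposition you were asked to prove.
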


This characterizes the combinatorial constraints imposed on the parameters 
${\bf k}$ by measuring the species concentrations at steady state.
For a single measurement ${\bf x}$, the result on $\mathcal{M}_{\rm one}$
tells us that the $19 \times 31$-matrix $F({\bf x})$ has rank $14 = 31 - 
{\rm rank}(\mathcal{M}_{\rm one})$.
After row operations, it block-decomposes into two matrices
of format $3\times 6$, one matrix of format $4 \times 11$, 
and four matrices of format $1 \times 2$.
Each of these seven matrices is row-equivalent to the
node-edge {\em cycle matrix} of a directed graph,
with underlying undirected graph as
in Figure \ref{fig:kmatroid} (a).

Consider the graph with edges
$9,10,11,12,13,20,21,22,23,30,31$.
The cycle $\{22,23,30,31\}$ reveals that our measurement ${\bf x}$
imposes \underline{one} linear constraint on
$k_{22}, k_{23},k_{30},k_{31}$.
If we take further measurements, as in
(\ref{eq:furthermeasure}), then six of the seven blocks
of $F({\bf x})$ remain unchanged. Only the
$4 \times 11$-block of $F({\bf x})$ 
must be enlarged, to a $7 \times 11$-matrix.
The rows of that new matrix specify the affine-linear dependencies
among $11$ points in $\mathbb{R}^3$.
That point configuration is depicted in 
Figure \ref{fig:kmatroid} (b). For instance,
the points $\{9,10,11\}$ are collinear,
the points $\{20,21,22\}$ are collinear,
but these two lines are skew in $\mathbb{R}^3$.
From the other line we see that that
repeated measurements at steady state 
impose \underline{two} linear constraints on
$k_{22}, k_{23},k_{30},k_{31}$.

\subsection{Circuit Data.} The second question we address in this section: {\em Given partial species concentration data, is any information about parameters available?} In Section 7.1,
all $19$  concentrations $x_i$ were available for a steady state.
In what follows, we suppose that $x_i$ can only be measured for indices $i$ in a subset of
the species, say $C \subset \{1,\ldots,19\}$. In our analysis, it will be useful to take
advantage of the rank $5$ algebraic matroid in Proposition \ref{prop:951},
since that matroid governs  dependencies among
the coordinates $x_1,\ldots,x_{19}$ at steady states.

We here focus on the special case when $C$ is one
of the $951$ circuits of the algebraic matroid of $\widetilde{I_m}$.
Let $f_C$ be the corresponding circuit polynomial, as 
in Table \ref{tab-circuits}.
We regard $f_C$ as a polynomial in ${\bf x}$ whose coefficients
are polynomials in $\mathbb{Q}[{\bf k}]$. Suppose that $f_C$
has $r$ monomials ${\bf x}^{a_1} ,\ldots , {\bf x}^{a_r}$. 
We write
$F_C  \in \mathbb{Q}[{\bf k}]^r$ for the vector of coefficients, so
our circuit polynomial is the dot product
$\,f_C({\bf k},{\bf x}) = F_C ({\bf k}) \cdot ({\bf x}^{a_1} ,\ldots , {\bf x}^{a_r})$.
We write $\mathcal{V}_C \subset \mathbb{R}^r$ for the algebraic
variety parametrized by $F_C({\bf k})$.
Thus $\mathcal{V}_C$ is the Zariski closure in
$\mathbb{R}^r $  of the set
$\{F_C({\bf k}')\,:\,{\bf k}' \in \mathbb{R}^{31}\}$.

Our idea for parameter recovery is this:
rather than looking for ${\bf k}$ compatible with the
true parameter ${\bf k}^*$, we seek a point ${\bf y} = F_C({\bf k})$ in $\mathcal{V}_C$
that is compatible with $F_C({\bf k}^*)$. And, only later do we 
compute a preimage of ${\bf y}$ under the map
$ \mathbb{R}^{31} \rightarrow \mathbb{R}^r $ given by $F_C$.
Most interesting is the case when  $\mathcal{V}_C$ is a
proper subvariety of $\mathbb{R}^r$.
Direct computations yield the following:

\begin{prop}
\label{prop:288}
For precisely $288$ of the $951$ circuits $C$ of the algebraic matroid of the steady state ideal
$\widetilde{I_m}$, the coefficient variety $\mathcal{V}_C$ is a proper subvariety
in its ambient space $\mathbb{R}^r$.
In each of these cases, the defining ideal of $\mathcal{V}_C$ 
is of one of the following four types:
\begin{align} \langle y_2y_6-y_3y_5 \rangle \label{eq:implicitIdeal1} \\
\langle y_5y_6-2y_3y_7,y_5^2-4y_2y_7,y_3y_5-2y_2y_6,y_2y_6^2-y_3^2y_7 \label{eq:implicitIdeal2} \rangle  & \\
\langle y_3y_5^2-y_2y_5y_6+y_1y_6^2 \rangle \label{eq:implicitIdeal3}  & \\
\langle 2y_3y_4-y_2y_5,y_2y_3-2y_1y_5,y_2^2-4y_1y_4 \rangle \label{eq:implicitIdeal4} &
\end{align}
\end{prop}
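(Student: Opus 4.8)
The plan is to realize each coefficient variety $\mathcal{V}_C$ as the Zariski closure of the image of an explicit polynomial map and to compute its defining ideal by implicitization. Fix a circuit $C$ and write its circuit polynomial, which is available from Proposition \ref{prop:951} and Table \ref{tab-circuits}, as $f_C = \sum_{j=1}^r (F_C)_j(\mathbf{k})\,\mathbf{x}^{a_j}$, where each entry $(F_C)_j$ lies in $\mathbb{Q}[\mathbf{k}]$. Then $\mathcal{V}_C$ is the image closure of the map $F_C : \mathbb{R}^{31} \to \mathbb{R}^r$, so its defining ideal is the kernel of the $\mathbb{Q}$-algebra homomorphism $\phi_C : \mathbb{Q}[y_1,\ldots,y_r] \to \mathbb{Q}[\mathbf{k}]$ sending $y_j \mapsto (F_C)_j$. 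Since $\mathbb{Q}[\mathbf{k}]$ is a domain, $\ker \phi_C$ is prime, so $\mathcal{V}_C$ is irreducible and cut out by this ideal. The variety $\mathcal{V}_C$ is then a \emph{proper} subvariety of $\mathbb{R}^r$ precisely when $\ker \phi_C \neq \langle 0 \rangle$, that is, when the coordinate polynomials $(F_C)_1,\ldots,(F_C)_r$ are algebraically dependent over $\mathbb{Q}$.

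With this reformulation the statement becomes a finite computation. For each of the $951$ circuits I would introduce slack variables $y_1,\ldots,y_r$, form the ideal $\langle\, y_1 - (F_C)_1,\ldots,y_r - (F_C)_r \,\rangle$ in $\mathbb{Q}[\mathbf{k},\mathbf{y}]$, and eliminate the $\mathbf{k}$ by a Gr\"obner basis in an elimination order with $\mathbf{k} > \mathbf{y}$; the generators free of $\mathbf{k}$ then generate $\ker \phi_C$. Counting the circuits whose kernel is nonzero yields the number $288$. Each elimination is individually small, since most entries $(F_C)_j$ are monomials in only a few of the $k_i$ and the number $r$ of monomials is modest, so the genuine cost lies in the bookkeeping across all $951$ cases rather than in any single computation.

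It remains to classify the $288$ proper cases by comparing each computed ideal, after reindexing the monomials $\mathbf{x}^{a_j}$, against the four normal forms \eqref{eq:implicitIdeal1}--\eqref{eq:implicitIdeal4}. The shape of each answer is dictated by the structure of the coefficient vector $F_C$. When the relevant entries $(F_C)_j$ are scalar multiples of monomials in the $k_i$, the dependence is a binomial (toric) relation; one checks directly that \eqref{eq:implicitIdeal1}, \eqref{eq:implicitIdeal2}, and \eqref{eq:implicitIdeal4} are toric, arising from the monomial parametrizations $(y_2,y_3,y_5,y_6)\!=\!(a,b,c,bc/a)$-type scalings, $(a^2,ab,2ac,bc,c^2)$, and $(a^2,2ab,ac,b^2,bc)$ respectively. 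By contrast, the genuinely trinomial relation \eqref{eq:implicitIdeal3} is non-toric and arises exactly when one coefficient $(F_C)_j$ is an irreducible sum of several monomials. Confirming that every proper $\mathcal{V}_C$ matches one of these four reduces to verifying that the Gr\"obner basis of each $\ker \phi_C$ agrees with the listed generators.

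The main obstacle is not any individual elimination but the completeness of the enumeration: one must certify, over all $951$ circuits, both the exact count $288$ and that no fifth type of relation appears. A secondary subtlety to settle is that the real and complex image closures share the same defining ideal, so that working with $\ker \phi_C$ over $\mathbb{Q}$ legitimately answers a question posed about $\mathcal{V}_C \subset \mathbb{R}^r$; this holds because $F_C$ is a polynomial map defined over $\mathbb{Q}$ and $\mathbb{R}^{31}$ is Zariski dense in $\mathbb{C}^{31}$, whence $F_C(\mathbb{R}^{31})$ is Zariski dense in the complex image and both have vanishing ideal $\ker \phi_C$.
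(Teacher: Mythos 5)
Your proposal is correct and takes essentially the same approach as the paper: the paper establishes this proposition purely by ``direct computations,'' namely implicitizing the coefficient map $F_C$ for each of the $951$ circuits (elimination of $\mathbf{k}$ from $\langle y_j - (F_C)_j \rangle$, i.e.\ computing $\ker\phi_C$) and then counting and classifying the nonzero kernels, which is exactly the scheme you describe. Your supplementary observations --- primality of $\ker\phi_C$, and the fact that the real and complex image closures are cut out by the same ideal over $\mathbb{Q}$ --- are correct and fill in details the paper leaves implicit.
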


\begin{ex} \label{ex:6_10_18}
Consider the circuit $C =  \{ 6,10,18\}$. The circuit polynomial $f_C$ equals
\[ \begin{matrix}
(k_{13}k_{20}k_{22}+k_{20}k_{22}k_{30}) \cdot x_6x_{10}
\,+\,k_{11}k_{20}k_{22} \cdot x_6x_{18}
\,-\,k_{12}k_{20}k_{22} \cdot x_6 \\ +(k_{13}k_{21}k_{23}+k_{13}k_{22}k_{23}+k_{21}k_{23}k_{30}+k_{22}k_{23}k_{30}+k_{13}k_{21}k_{31}+k_{13}k_{22}k_{31}) \cdot x_{10} \\ +(k_{11}k_{21}k_{23}+k_{11}k_{22}k_{23}+k_{11}k_{21}k_{31}+k_{11}k_{22}k_{31}) \cdot x_{18} \\ -\,(k_{12}k_{21}k_{23}+k_{12}k_{22}k_{23}+k_{12}k_{21}k_{31}
+k_{12}k_{22}k_{31}). \phantom{xxx}
\end{matrix} \]
Here $r = 6$ and we write $F_C({\bf k}) = (y_1,y_2,y_3,y_4,y_5,y_6)$ for the
vector of coefficient polynomials. The variety $\mathcal{V}_C$ is the
hypersurface in $\mathbb{R}^5$ defined by the equation
$y_2 y_6 = y_3 y_5$.
\end{ex}

We now sample data points ${\bf x}_i$ from the
model with the true (but unknown) parameter vector ${\bf k}^*$.   Each such point
defines a hyperplane 
$\,\{ {\bf y} \in \mathbb{R}^r \,:\,{\bf y} \cdot ({\bf x}_1^{a_1} ,\ldots , {\bf x}_r^{a_r}) = 0 \}$.
The parameter estimation problem is to find the 
intersection of  these data hyperplanes with the variety $\mathcal{V}_C$.
That intersection contains the point ${\bf y}^* = F_C({\bf k}^*)$,
which is what we now aim to recover.

\subsection{Noisy Circuit Data.}
The final question we consider in this section is: {\em Given partial species concentration data with noise, is any information about parameters available?} 

As in Section 7.2, we fix
a circuit $C$ of the algebraic matroid in Section \ref{sec5},
and we assume that we can only measure the concentrations $x_j$
where $j \in C$. Each measurement ${\bf x}_i \in \mathbb{R}^C$ still defines a hyperplane
$\,{\bf y} \cdot ({\bf x}_i^{a_1} ,\ldots , {\bf x}_i^{a_r}) = 0 \,$
in the space $\mathbb{R}^r$. But now the true vector
${\bf y}^* = F_C({\bf k}^*)$ is not exactly on that hyperplane,
but only close to it. Hence, if we take $s$ repeated measurements,
with $s > r$, the intersection of these hyperplanes should be empty.

We propose to find the best fit by solving the following least squares optimization problem:
\begin{equation}
\label{eq:leastsquares}
 {\rm Minimize} \quad \sum_{i=1}^s \bigl(\,{\bf y} \cdot ({\bf x}_i^{a_1} ,\ldots , {\bf x}_i^{a_r})\,\bigr)^2 
\quad \hbox{subject to} \,\,\, {\bf y} \in \mathcal{V}_C \,\cap\, \mathbb{S}^{r-1} ,
\end{equation}
where $\mathbb{S}^{r-1} = \{ {\bf y} \in \mathbb{R}^r : y_1^2 + y_2^2 + \cdots + y_r^2 = 1\}$ denotes the unit sphere.
When the variety $\mathcal{V}_C$  is the full ambient space $\mathbb{R}^r$,
this is a familiar regression problem, namely, to find the hyperplane through the
origin that best approximates  $s$ given points in $\mathbb{R}^r$.
Here ``best'' means that the sum of the squared distances of the $s$ points
to the hyperplane is minimized.
This happens for $663$ of the $951$ circuits $C$, and in that case
we can apply standard techniques.

However, for the $288$ circuits $C$ identified in Proposition \ref{prop:288},
the problem is more interesting. Here the hyperplanes
under consideration are constrained to live in
a proper subvariety.
 In that case we need some algebraic geometry
to reliably find the global optimum in (\ref{eq:leastsquares}).

Our problem is to minimize a quadratic function over the real affine variety $\mathcal{V}_C \cap
\mathbb{S}^{r-1}$. The quadratic objective function is generic because
the ${\bf x}_i$ are sampled with noise.
The intrinsic algebraic complexity of our optimization problem was studied 
by Draisma et al.~in~\cite{DHOST}.
That complexity measure is the {\em  ED degree} of $\mathcal{V}_C \cap \mathbb{S}^{r-1}$,
which is the number of solutions  in $\mathbb{C}^r$
to the critical equations of (\ref{eq:leastsquares}).
Here, by ED degree we mean the ED degree of $\mathcal{V}_C \cap \mathbb{S}^{r-1}$,
when considered in generic coordinates. This was 
called the {\em generic ED degree} in \cite{OSS}.

We illustrate our algebraic approach 
by working out the first instance
(\ref{eq:implicitIdeal1}) 
in Proposition~\ref{prop:288}.

\begin{ex}
Suppose we are given $s$ noisy measurements of the concentrations $x_6, x_{10},x_{18}$.
In order to find the best fit for the parameters ${\bf k}$, we 
employ the circuit polynomial $f_C$ in Example \ref{ex:6_10_18}.
We compute ${\bf y} \in \mathbb{R}^6$ by solving the corresponding optimization problem
(\ref{eq:leastsquares2}). This problem is to minimize a random quadratic form
subject to two quadratic constraints
\begin{equation}
\label{eq:leastsquares2}
 y_2y_6-y_3y_5  \,\,= \,\,
 y_1^2 + y_2^2 + y_3^2 + y_4^2 + y_5^2 + y_6^2  -1 \,\,=\,\,0.
\end{equation}
We solve this problem using the method of Lagrange multipliers.
This leads to a system of polynomial equations in ${\bf y}$. Using saturation, we
remove the singular locus of (\ref{eq:leastsquares2}),
which is the circle $\{{\bf y} \in \mathbb{R}^6: y_1^2+y_4^2-1 = y_2 = y_3 = y_5 = y_6 = 0\}$. The resulting
ideal has precisely $40$ zeros in $\mathbb{C}^{6}$. 
In the language of \cite{DHOST,OSS}, the generic ED degree of the variety
(\ref{eq:leastsquares2}) equals~$40$.
\end{ex}

\section{From Algebra to Biology}
\label{sec8}

The aims of this paper are: (1) to demonstrate how biology can lead to interesting
questions in algebraic geometry, and (2) to apply new techniques from computational algebra
in biology. So far, our tour through (numerical) algebraic geometry, polyhedral geometry and combinatorics
has demonstrated the range of mathematical questions to explore. In this section, we will focus on
translating our analysis into applicable considerations for the research cycle in systems biology,
which is illustrated in Figure~\ref{fig:sysbio}.
In what follows we discuss some concrete applications and results pertaining to 
the steps (a), (b) and (c) in 
Figure~\ref{fig:sysbio}.


\begin{figure}[ht!]
\begin{center}
	\includegraphics[width=6.7in]{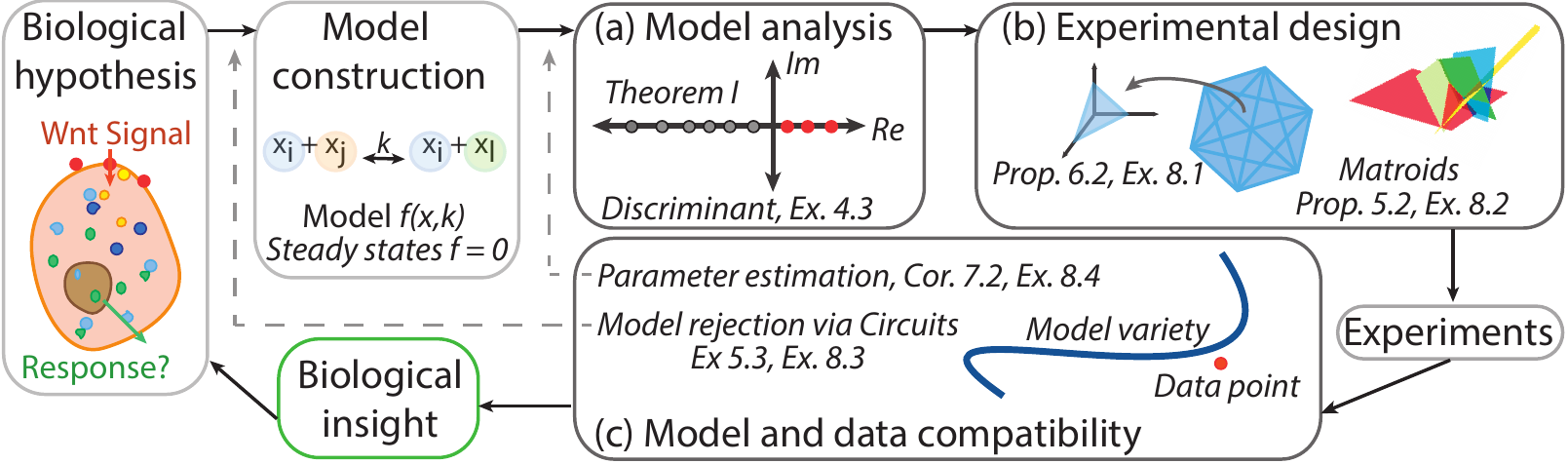}
	\caption{Systems biology cycle informed by algebraic geometry and combinatorics. (a) Model analysis. See Sections 1, 3, 4. (b) Experimental design. See Sections 5 and 6. (c) Model and data compatibility. See Sections 5 and 7. }
	\label{fig:sysbio}
\end{center}
\end{figure}

{\bf Analysis of the Model}:  Before any experiments are performed, our techniques inform the
modeler of the global steady-state properties of the model.
The number of real solutions to system (\ref{eq:diffeqn})--(\ref{eq:conservation}), stated in Theorem~\ref{thm:nine},
governs the number of observable steady states.
Various sampling schemes demonstrated that \emph{most} parameter values
lead to only one observable steady state.
We produced a set of parameter values and conserved quantities with three real solutions,
and two solutions are also attainable. 
If the ``true" parameters ${\bf k}^*$ and ${\bf c}^*$ admit multiple real 
solutions, then multistationarity of the system is theoretically possible.

If multiple states are observed experimentally, then the model must be capable of multistationarity. In the Wnt shuttle model, the system is capable of multiple steady states; however, based on parameter sampling, the frequency of this occurrence is low, and parameters in this regime are 
somewhat stable under perturbation.
The discriminant of the system is a polynomial of degree $34$ in ${\bf c}$,
and our analysis along a single line in ${\bf c}$-space illustrates the high degree of 
complexity inherent in the full stratification of the $36$-dimensional parameter space.

{\bf Experimental Design}: 
In Section \ref{sec6}, the combinatorial structure of the various stoichiometric compatibility classes 
was fully characterized. As the conserved quantities ${\bf c} = (c_1,\ldots,c_5)$ range over all
positive real values, the set of all compatible species-concentration vectors
${\bf x}$ will take one of $19$ polyhedral shapes $P_{\bf c}$.
This may find application in identifying multiple steady state solutions
for specific rate constants ${\bf k}$. A natural choice for initial conditions when 
performing experiments is on or near the vertices of the $14$-dimensional polyhedron~$P_{\bf c}$.

\begin{ex}
Suppose the conserved quantities vector  lies in the bipyramid, e.g. ${\bf c} = (1,2,2,2,3)$. 
The preimage of ${\bf c}$ in ${\bf x}$-space is  a product of the orthant
 $\RR_{\geq 0} \{ {\bf e}_{10},{\bf e}_{11}\}$ and a 12-dimensional polytope with $400$ vertices: $(1,0,0,2,0,0,0,2,2,0,0,3,0,0,0,0,0,0,0,0)$, and $399$ of its permutations.
 This product is the polyhedron $P_{\bf c}$.
 If we have control over initial conditions, beginning near the vertices positions us to find 
 interesting systems behavior.
  \end{ex}

In the laboratory, the experimentalist makes choices 
of what to measure and what not to measure. 
For instance, measuring a particular $x_i$ may be infeasible, or
there may be a situation in which measuring
concentration $x_i$ can preclude 
measuring concentration $x_j$. 

For every strategy, we fix a \emph{cost vector}, listing the costs of making 
each measurement.
We use the symbol $N$ to indicate infeasible measurements. 
Suppose there are two different ways to run the experiment; then we have a 
$2 \times 19$ \emph{cost matrix} $P$, whose rows are cost vectors for each experiment. We multiply $P$ by the
$0$-$1$-incidence matrix for the $951$ circuits of Proposition \ref{prop:951}.
That matrix has a $1$ in row $i$ and column $j$ if circuit
$j$ contains species $i$, and $0$ otherwise. The product
is a matrix of size $2 \times 951$. For $N \to \infty$, the
$2 \times 951$ matrix has a finite entry in position $(i,j)$ precisely
when the strategy $i$ can measure the circuit $j$. Minimizing over those
finite cost entries selects the most cost-effective experiment to measure a circuit.

\begin{ex}
Suppose that none of the intermediate complexes  $x_{13},\ldots,x_{19}$  are measurable, and that 
we are able to measure only one Phosphatase concentration ($x_4$ or $x_8$) in each experimental setup.
A corresponding cost matrix might look like
\[ P \,\, = \,\, \left[
\begin{array}{ccccccccccccccccccc}
1 & 1 & 1 & N & 1 & 1 & 1 & 1 & 1 & 1 & 1 & 1 & N & N & N & N & N & N & N \\
1 & 1 & 1 & 1 & 1 & 1 & 1 & N & 1 & 1 & 1 & 1 & N & N & N & N & N & N & N 
\end{array} \right]
\]
Multiplying by the circuit support matrix of size $19 \times 951$ reveals $82$ feasible experiments: 
$50$ using the first row of $P$, and $32$ using the second. With more refined cost assignment,
this would decide not only feasibility but also optimal cost. 
In this way, the matroid allows us to choose cost-minimal experiments
to obtain meaningful information for the model.
\end{ex}

{\bf Model and data compatibility}: After an experiment is performed, the task of the modeler is to test the data with the model.  One possible outcome is {\em model rejection}.
If the data are compatible, then another outcome is {\em parameter estimation}. Both may provide
 insights for biology.  
The role of algebraic geometry is
 seen in \cite{gross15,HHTS} and  shown in the next two examples.

\begin{ex}[Model Rejection]
Suppose that rate parameters $k_i$ are all known to be $1$, and that we have collected
data for variables $x_1,x_4,x_{14}$. The circuit polynomial is 
$k_1k_3x_1x_4+(-k_2k_4-k_2k_5)x_{14} $, which specializes to $ x_1x_4 - 2x_{14}$.
If the evaluation of the positive quantity $|x_1 x_4 - 2 x_{14}|$
lies above a threshold $\epsilon$, then we can reject the model as not matching the data.
\end{ex}
Every circuit polynomial of the matroid is a {\em steady state invariant}; depending on which experiment was performed, the collection of measured variables must contain some circuit. Even if one can measure all $19$ species at steady state, it is not possible to recover all 31 kinetic rate constants, but we do have relationships that must be satisfied among parameters~\cite{MHSB}.

\begin{ex}[Parameter Estimation] \label{ex:paraesti}
Suppose that rate parameters are unknown, and that we have collected
data for $x_6,x_{10},x_{18}$. The corresponding circuit polynomial
$f_C$ is shown in Example \ref{ex:6_10_18}. We know that the coefficients of $f_C$
satisfy the constraint  $y_2 y_6 = y_3 y_5$.
Suppose our experiments lead to the following ten measurements for the vector $(x_6,x_{10},x_{18})$:
\begin{small}
$$ \begin{matrix} 
\{(.715335, 4.06778, 14.6806), 
(.390982, 4.83152, 6.08251), 
(.706539, 4.98107, 3.83617),  \\ \!
(.14316, 4.30851, 12.5809),  
(.995583, 4.01222, 15), 
(.413817,4.08114, 14.902), 
(.232206, 3.38274, 23.3162), \\
(.219045, 5.06008, 3.67175), 
(.704106, 3.52804, 21.1037),  
(.648732, 3.6505, 19.7008)\}
 \end{matrix}
  $$
  \end{small} 
The data lead us to the following function to optimize in (\ref{eq:leastsquares}):
$$ \! \begin{matrix}
57.2345y_1^2+376.181y_1y_2+801.672y_2^2-27.5625y_1y_3-96.4429y_2y_3 \\
+3.36521y_3^2+179.49y_1y_4+564.034y_2y_4- 42.729y_3y_4+178.839y_4^2+564.034y_1y_5\\
+2424.31y_2y_5-144.7y_3y_5+1054.49y_4y_5+2263.2y_5^2-42.729y_1y_6\\
-144.7y_2y_6+10.339y_3y_6-83.8072y_4y_6-269.749y_5y_6+10y_6^2
  \end{matrix}
  $$
\end{ex}
The global minimum of this quadratic form on 
the codimension $2$ variety (\ref{eq:leastsquares2}) has coordinates
$$ 
 y_1 = 0.183472, \,
  y_2 = 0.152416, \,
 y_3 = 0.959232, \,
 y_4 = 0.038042, \,
 y_5 = 0.00335267, \,
 y_6 = 0.211.
$$
Given these values, one now has three degrees of freedom in estimating the
nine parameters $k_i$ that appear in the circuit polynomial $f_C$. The other ten
coordinates of ${\bf k}$ are unspecified.

\medskip

\section*{Acknowledgements}

This project was supported by UK Royal Society
International Exchange Award   2014/R1 IE140219.
EG, BS and HAH initiated discussions at an American Institute of Mathematics workshop
in Palo Alto.
Part of the work was carried out at the
Simons Institute for  Theory of Computing in Berkeley.
HAH gratefully acknowledges EPSRC Fellowship EP/K041096/1.
EG, ZR and BS were also 
supported by the US National Science Foundation, through 
grants DMS-1304167, DMS-0943745 and DMS-1419018
respectively.
Thanks to Helen Byrne and Reinhard Laubenbacher
for comments on early drafts of the paper.

\bigskip

\bigskip

\end{document}